\documentclass[12pt,english]{IEEEtran}
\usepackage[T1]{fontenc}
\usepackage[latin9]{inputenc}
\usepackage{geometry}
\usepackage{color}
\usepackage{float}
\usepackage{textcomp}
\usepackage{amsmath}
\usepackage{amssymb}
\usepackage{stackrel}
\usepackage{graphicx}
\usepackage{esint}
\usepackage{cite}
\usepackage{hyperref} 
\newtheorem{theorem}{\bfseries Theorem}

\makeatletter

\newcommand*\LyXZeroWidthSpace{\hspace{0pt}}
\providecommand{\tabularnewline}{\\}
\floatstyle{ruled}
\newfloat{algorithm}{tbp}{loa}
\providecommand{\algorithmname}{Algorithm}
\floatname{algorithm}{\protect\algorithmname}

\@ifundefined{showcaptionsetup}{}{%
 \PassOptionsToPackage{caption=false}{subfig}}
\usepackage{subfig}
\makeatother

\usepackage{babel}
\begin{document}

\title{RSMA-Enabled ISAC Networks with Fluid Antenna Systems: Stochastic
Geometry Analysis and Low-Complexity Resource Allocation}
\author{\small{Abdelhamid Salem, \textit{Member,~IEEE}, 
            Hana Shamata,
            Salma Elkawafi,
            Khaled M. Rabie, \textit{Senior Member, IEEE},
            Xingwang Li, \textit{Senior Member, IEEE},
            Turki Essa Alharbi, and 
            Mohammed S. Alzaidi}

\vspace{-6mm}

\thanks{A. Salem, Hana Shamata and  Salma Elkawafi are with the Department of Electronic and Electrical Engineering, University of Benghazi, Benghazi, Libya (e-mail: \{abdelhamid.albaraesi,hana.shamata, salma. elkawafi\}@uob.edu.ly).} 
\thanks{Khaled M. Rabie is with the Department of Computer Engineering, King Fahd University of Petroleum and Minerals (KFUPM), Dhahran, Saudi Arabia (email: k.rabie@kfupm.edu.sa).}
\thanks{Xingwang Li is with the School of Physics and Electronic Information
Engineering, Henan Polytechnic University, Jiaozuo 454003, China, and also
with the National Mobile Communications Research Laboratory, Southeast
University, Nanjing 210096, China (e-mail: lixingwangbupt@gmail.com).}
\thanks{Turki Essa Alharbi and Mohammed S. Alzaidi are with the Department of Electrical Engineering, College of Engineering, Taif University, Taif, Saudi Arabia (e-mail: \{turki.alharbi,m.alzaidi\}@tu.edu.sa).}
}

\maketitle
\begin{abstract}
Integrated sensing and communications (ISAC) has recently emerged
as a key technology for sixth-generation (6G) wireless networks by
enabling simultaneous communication and sensing using shared radio
resources. Meanwhile, rate-splitting multiple access (RSMA) offers
robust interference management and improved spectral efficiency, whereas
fluid antenna systems (FAS) provide additional spatial diversity through
low-complexity port switching. Despite these advantages, the joint
performance of RSMA-enabled ISAC networks with FAS remains largely
unexplored, particularly from a network-level perspective. In this
paper, we investigate the downlink performance of multi-cell RSMA-enabled
ISAC networks in which base stations (BSs), communication users, and
sensing targets are spatially distributed according to independent
Poisson point processes (PPPs). Each BS simultaneously serves multiple
users using RSMA while exploiting the common stream as a dual-functional
communication and sensing waveform. The users are equipped with FAS
that selects the best antenna port to maximize the received signal
quality. Closed-form analytical expressions are derived for the ergodic
sum-rates by combining stochastic geometry, order statistics, and
Laplace-transform-based interference analysis. Furthermore, a tractable
approximation for the average radar SINR is developed by characterizing
the statistical properties of the common precoder. Leveraging the
derived analytical expressions, a low-complexity analytical resource
allocation framework is proposed to jointly optimize the RSMA power
allocation, the communication-sensing beam tradeoff, and the number
of scheduled users while satisfying the sensing quality-of-service
constraint. Compared with conventional iterative optimization approaches,
the proposed analytical design significantly reduces computational
complexity while achieving nearly identical communication performance.
Simulation results verify the accuracy of the developed analytical
expressions and demonstrate substantial improvements in both RSMA
sum-rate and sensing performance over conventional transmission schemes.
\end{abstract}

\begin{IEEEkeywords}
	Integrated sensing and communication (ISAC), rate-splitting multiple access (RSMA), fluid antenna system (FAS), stochastic geometry (SG), resource allocation.
\end{IEEEkeywords}

\section{Introduction}

Integrated sensing and communication (ISAC) has emerged as a key enabling technology for sixth-generation (6G) wireless networks by allowing sensing and communication functions to share spectrum, waveforms, hardware platforms, and radio resources \cite{8999605,9540344,9705498,9737357}. This integration can improve spectral, energy, and hardware efficiency by supporting data transmission while simultaneously extracting environmental and target information. Most existing studies have investigated ISAC at the link or system level, typically considering a single base station (BS). Representative contributions include joint radar--communication waveform design \cite{8386661}, multiuser MIMO beamforming \cite{9124713}, optimal transmit beamforming \cite{10086626}, fundamental downlink and uplink performance analysis \cite{9800940}, performance analysis and power allocation for massive MIMO ISAC \cite{10938928}, and joint pilot and transmission design for channel estimation and target detection \cite{10630588}. Although these studies provide important insights into waveform, beamforming, and resource design, their deterministic single-cell models do not explicitly capture spatially random deployments, aggregate inter-cell interference, or cooperation among multiple ISAC transceivers.

Network-level ISAC extends sensing and communication services across geographically distributed BSs, thereby providing broader sensing coverage, spatially diverse observations, and enhanced communication performance \cite{11184506,10726912}. These gains, however, require addressing aggregate inter-cell interference, user and target scheduling, joint resource allocation, and the signaling overhead associated with information exchange among transceivers. Multi-cell precoding under coordinated beamforming, coordinated multipoint transmission, and bistatic sensing was investigated in \cite{10577579}. Stochastic geometry was subsequently employed to analyze inter-cell interference and BS coordination \cite{10735119}, network densification \cite{10490156}, cooperative communication and multistatic sensing under backhaul constraints \cite{10769538}, and antenna topologies ranging from concentrated massive MIMO to distributed cell-free deployments \cite{11030947}. Nevertheless, these network-level studies mainly rely on conventional transmission and fixed antenna architectures.

Rate-splitting multiple access (RSMA) has emerged as a flexible multiple-access and interference-management framework for multi-antenna wireless networks \cite{9831440,10038476,7470942}. Unlike space-division multiple access, which generally treats residual interference as noise, and power-domain non-orthogonal multiple access, which relies on the complete decoding of selected interfering streams according to an SIC order, RSMA enables each receiver to decode part of the interference while treating the remainder as noise \cite{9451194}. In one-layer RSMA, each user message is divided into common and private parts, which are encoded into a common stream and user-specific private streams, respectively. Each user first decodes and removes the common stream before decoding its intended private stream. Through adaptive message splitting, power allocation, and precoding, RSMA provides robustness to imperfect CSIT, heterogeneous channel conditions, and overloaded transmission \cite{7152864,7513415,7555358,7434643,8019852}. Its benefits under practical finite-constellation signaling have also been demonstrated by exploiting, rather than solely suppressing, multiuser interference \cite{9468643}.

RSMA has also been examined in distributed and multi-cell communication networks. Topological rate splitting was developed for interference networks with heterogeneous CSIT qualities \cite{7805217}, while RSMA-enabled user-centric clustering was analyzed in cloud radio access networks \cite{9896157}. A stochastic-geometry framework was further developed for dense multi-cell networks to quantify the effects of BS density, network loading, power allocation, and aggregate interference \cite{10172063}. These studies, however, address communication-only systems. The integration of RSMA with ISAC was initiated through joint common- and private-stream precoding for communication-rate and radar-beampattern design \cite{9531484,9832622}. Subsequent contributions considered multiuser communication and multi-target sensing \cite{10486996}, energy-efficient hardware-constrained architectures \cite{9797869}, transmission under partial CSIT \cite{Loli2022RSMAISAC}, and cell-free massive MIMO ISAC \cite{11192472}. Experimental results have also demonstrated the practical communication and sensing benefits of RSMA \cite{11358849}. Nevertheless, most existing RSMA--ISAC studies adopt deterministic link-level models and fixed antenna configurations.

Complementary to the message-domain interference management offered by RSMA, fluid antenna systems (FAS) provide spatial-domain adaptation by selecting one or more candidate ports within a compact physical aperture \cite{9131873,9264694,10753482}. By exploiting local spatial channel variations, FAS can improve diversity and outage performance and mitigate interference without requiring a proportional increase in active radio-frequency chains. Their implementation principles and electromagnetic reconfigurability were discussed in \cite{10909643}. Since the channels observed at closely spaced ports are spatially correlated, the achievable FAS gain depends on the aperture size, port density, correlation structure, and selection strategy. Correlated-channel approximations, outage and diversity analyses, block-correlation models, and port-selection methods were developed in \cite{10103838,10130117,10623405,9715064}.

FAS have subsequently been extended to multiuser and multiple-access systems \cite{9650760,10794752,refport}. Their integration with RSMA combines spatial channel reconfiguration with partial interference decoding. Existing studies have analyzed the outage and ergodic-rate performance of FAS-assisted RSMA under spatially correlated channels \cite{11134603,11455559}, while broader design opportunities were discussed in \cite{11506060}. These works, however, consider communication-only systems. FAS have also been incorporated into ISAC through joint antenna-position or port-selection and beamforming design \cite{10707252,10705114}. Related studies addressed multiuser FAS--ISAC using deep reinforcement learning \cite{10477314}, sensing-SCNR maximization \cite{10949741}, multi-target multistatic sensing \cite{11175218}, and secure transmission through joint precoding and port selection \cite{11563569}. These contributions predominantly rely on deterministic single-cell models.

A recent preprint jointly considered FAS, RSMA, and ISAC through robust beamforming and antenna-position optimization under communication, sensing, and physical-layer-security constraints \cite{Zhang2025FASRSMAISAC}. However, it focuses on secure link-level design under a deterministic network configuration. To the best of our knowledge, the network-level interplay among RSMA transmission, port selection over spatially correlated FAS channels, random BS--user--target deployments, and aggregate inter-cell interference has not yet been analytically characterized. In particular, a unified stochastic-geometry framework for jointly evaluating common- and private-stream rates and sensing performance in multicell FAS-assisted RSMA--ISAC networks remains unavailable.

Accordingly, this paper develops an analytical framework for FAS-assisted RSMA--ISAC cellular networks. The RSMA common stream is shaped to simultaneously convey common information to the scheduled users and illuminate the sensing target, thereby avoiding a separate sensing waveform. The BSs, users, and sensing targets are modeled as independent PPPs, while port selection over spatially correlated FAS channels and aggregate inter-cell interference are explicitly incorporated into the analysis. The resulting expressions are then used to design a low-complexity resource-allocation framework for the communication--sensing beam tradeoff, common--private power split, and number of scheduled users.

The main contributions are summarized as follows:
\begin{itemize}    
    \item A novel analytical network-level stochastic-geometry framework is developed for multicell FAS-assisted RSMA--ISAC networks. The proposed model jointly captures random BS, user, and target locations, common and private stream transmission, spatially correlated FAS channels with best-port selection, and aggregate inter-cell interference.

    \item New analytical expressions are derived for the communication performance. Tight upper and lower bounds on the ergodic common and private rates are derived using Laplace transforms, order statistics, moment matching, and stochastic-geometry tools. The results explicitly reveal the effects of FAS selection, user ordering, RSMA parameters, and network interference.
    
    \item A tractable approximation of the average sensing SINR is derived by characterizing the statistical properties of the common stream beamformer through moment matching. The result reveals the effects of sensing power, the communication--sensing beam tradeoff, and the spatial distribution of sensing targets.
    
    \item A low-complexity resource-allocation framework is developed to optimize the communication--sensing beamforming parameter, RSMA power split, and number of scheduled users using closed-form solutions and one-dimensional searches. Compared with conventional iterative optimization algorithms, the proposed framework substantially reduces computational complexity while maintaining nearly identical communication performance.

	\item Extensive Monte Carlo simulations validate the analytical results and demonstrate tight analytical accuracy and improved communication--sensing performance with substantially lower complexity than conventional iterative optimization methods. 
\end{itemize}
The remainder of this paper is organized as follows.
Section~II presents the system model, Section~III develops the communication performance analysis and derives
the ergodic common and private rate expressions, while
Section~IV characterizes the sensing performance in terms of the average sensing SINR. Section~V develops the joint RSMA--ISAC resource-allocation framework and presents the proposed low-complexity solution. Section~VI presents
the numerical results and validates the analytical framework through Monte Carlo simulations. Finally, Section~VII concludes the paper.

\section{System Model\label{sec:System-Model}}
We consider a downlink multi-cell RSMA--ISAC network in which distributed BSs simultaneously serve communication
users and sense targets over the same time--frequency resources.
The BS locations are modeled by an independent homogeneous Poisson point process (PPP) $\Phi_b\subset\mathbb{R}^{2}$ with density $\lambda_b$.
Communication users and sensing targets are independently distributed
according to homogeneous PPPs $\Phi_u$ and $\Phi_r$ with densities
$\lambda_u$ and $\lambda_r$, respectively. Each sensing target is located
at altitude $h_t$ above the horizontal network plane. During each
scheduling interval, every BS selects the nearest $K$ users for simultaneous RSMA transmission, where the optimal $K$ will be investigated in this work
\footnote{The number of scheduled users $K$ is treated as a finite design variable rather than a random PPP-dependent quantity. As $K$ increases, the probability of including users with large serving distances or unfavorable channel conditions also increases, and as $K\to\infty$, the common rate $\min(R_{c,k})\to0$ may approach zero and diminish the RSMA gain. Hence, $K$ is restricted to a finite range and optimized later in the paper. This assumption is adopted for analytical tractability and follows commonly used SG-based multiuser network models \cite{10172063}.}.

The proposed model is based on the following assumptions:
\begin{itemize}
    \item Each BS is equipped with $N_s$ fixed transmit antennas for
    downlink RSMA--ISAC transmission and an $N_r$ fixed receive
    antenna for target sensing.
    
    \item All BSs are active and reuse the same bandwidth, causing aggregate inter-cell interference at the communication users and sensing receivers.
    
    \item Each user is equipped with a FAS with $P$ ports uniformly distributed along a linear aperture of length $W\lambda$, where $W$ is a scaling factor and $\lambda$ is the carrier wavelength. Only one port is activated during each transmission interval.
    
    \item No dedicated sensing waveform is transmitted. Instead, the
    RSMA common stream is employed as a dual-functional waveform for
    conveying common information and illuminating the sensing target.
\end{itemize}

\subsection{Channel Model}

To characterize the communication channels of the FAS-equipped users, we adopt a spatially correlated Rayleigh fading model suitable to characterize isotropic scattering environments for compact antenna surfaces \cite{10103838,10623405}.

\subsubsection{BS-to-User Communication Channel Model}
\label{subsubsec:communication_channel}

For small-scale fading between the $n$-th transmit antenna of the typical
BS and the $p$-th FAS port at $k$ user, the complex channel can be modeled as 

\begin{equation}
	\begin{split}
		h_{k,p,n}
		={}&\sigma\left(
		\sqrt{1-\mu^{2}}\,x_{k,p,n}
		+\mu x_{k,1,n}
		\right)\\
		&+j\sigma\left(
		\sqrt{1-\mu^{2}}\,y_{k,p,n}
		+\mu y_{k,1,n}
		\right).
	\end{split}
	\label{eq:fas_channel}
\end{equation}
where $x_{k,p,n}$ and $y_{k,p,n}$ are mutually independent real-valued Gaussian
random variables with zero mean and variance of $1/2$. Consequently, $h_{k,p,n}$
follows a complex normal distribution as $CN(0,\sigma^{2})$ while
the magnitude $\left|h_{k,p,n}\right|$ follows a Rayleigh distribution.
The components $x_{k,1,n}$ and $y_{k,1,n}$ represent reference port variables, i.e., $p=1$, at user $k$. Further, we can define $h_{k,1,n}\triangleq\sigma\left(x_{k,1,n}+jy_{k,1,n}\right)$ as the common random variable that links the random channels at the ports with the correlation parameter that can be chosen as 
\begin{equation}
\mu^{2}
=
\left|
\frac{2}{P(P-1)}
\sum_{q=1}^{P-1}
(P-q)
J_{0}\left(
\frac{2\pi qW}{P-1}
\right)
\right|,
\label{eq:fas_correlation_parameter}
\end{equation}
where $q$ denotes the port separation index and $J_{0}(\cdot)$ is
the zeroth-order Bessel function of the first kind. 

\subsubsection{Sensing Channel Model}
\label{subsubsec:sensing_channel}

For target $m$, the channel between the $N_s$ transmit and the $N_r$ receive array of
the typical BS $o$ is modeled as
\begin{equation}
\mathbf{G}_{m,o}
=
\sqrt{\beta_{m,o}\zeta_{m,o}}\,
\mathbf{a}_{r}(\theta_{m})
\mathbf{a}_{t}^{H}(\theta_{m})
\in
\mathbb{C}^{N_r\times N_s}.
\label{eq:sensing_channel}
\end{equation}
Here, $\beta_{m,o}$ denotes the target distance-dependent sensing power loss, $\zeta_{m,o}$ is the radar
cross section (RCS) of the target, and $\theta_m$ denotes the target direction relative to the typical BS. The transmit steering vector is defined as
\begin{equation}
\mathbf{a}_{t}(\theta_m)
=
\left[
e^{j\frac{2\pi}{\lambda}\triangle_{t}^{1}(\theta_m)},
\ldots,
e^{j\frac{2\pi}{\lambda}\triangle_{t}^{N_s}(\theta_m)}
\right]^{T}
\in
\mathbb{C}^{N_s\times 1},
\label{eq:transmit_steering_vector}
\end{equation}

whereas the receive steering vector is
\begin{equation}
\mathbf{a}_{r}(\theta_m)
=
\left[
e^{j\frac{2\pi}{\lambda}\triangle_{r}^{1}(\theta_m)},
\ldots,
e^{j\frac{2\pi}{\lambda}\triangle_{r}^{N_r}(\theta_m)}
\right]^{T}
\in
\mathbb{C}^{N_r\times 1}.
\label{eq:receive_steering_vector}
\end{equation}
where $\triangle_{m}^{n}(\theta_m)$ denotes the propagation distance difference
between the $n$-th port and the reference port in the direction $\theta_m$.

\subsection{Communication System}

In this work, each BS simultaneously serves $K$ scheduled users using RSMA. At BS $l$, the message intended for user $k$ is divided into a common part and a private part. The common parts of the $K$ user messages are jointly encoded into a common stream $s_{l,c}$, whereas the remaining parts are independently encoded into the user-specific private streams $\{s_{l,i}\}_{i=1}^{K}$. The transmitted symbol
vector at BS $l$ is defined as
\begin{equation}
	\mathbf{s}_{l}
	=
	\left[
	s_{l,c},
	s_{l,1},
	\ldots,
	s_{l,K}
	\right]^{T},
	\qquad
	\mathbb{E}
	\left[
	\mathbf{s}_{l}\mathbf{s}_{l}^{H}
	\right]
	=
	\mathbf{I}_{K+1}.
	\label{eq:rsma_symbol_vector}
\end{equation}

The common and private streams are linearly precoded and superimposed
before transmission. Accordingly, the signal transmitted by the typical BS $o$
is
\begin{equation}
	\begin{aligned}
		\mathbf{x}_{o}
		=
		\mathbf{W}_{o}\mathbf{s}_{o}
		={}&
		\sqrt{P_{o,c}}\,
		\mathbf{w}_{o,c}s_{o,c}
		\\
		&+
		\sum_{i=1}^{K}
		\sqrt{P_{o,i}}\,
		\mathbf{w}_{o,i}s_{o,i},
	\end{aligned}
	\label{eq:transmit_signal}
\end{equation}
where $\mathbf{w}_{o,c}\in\mathbb{C}^{N_s\times1}$ and
$\mathbf{w}_{o,i}\in\mathbb{C}^{N_s\times1}$ are the beamforming vectors of common and private streams, respectively. The power-weighted
precoding matrix is
$\mathbf{W}_{o}
=
[\sqrt{P_{o,c}}\mathbf{w}_{o,c},
\sqrt{P_{o,1}}\mathbf{w}_{o,1},
\ldots,
\sqrt{P_{o,K}}\mathbf{w}_{o,K}]
\in\mathbb{C}^{N_s\times(K+1)}$.
The beamformers and transmit powers satisfy
	$\left\|
	\mathbf{w}_{o,c}
	\right\|^{2}
	=
	\left\|
	\mathbf{w}_{o,i}
	\right\|^{2}
	=
	1,
	P_{o,c}
	+
	\sum_{i=1}^{K}P_{o,i}
	\leq
	P_o,$
where $P_o$ is the total transmit power of BS $o$. 

Unlike conventional ISAC systems that employ a dedicated sensing waveform, the common stream is used as a dual-functional waveform in the proposed system. Its beamformer is jointly designed to convey common information to the scheduled users and illuminate the sensing target. The detailed common- and private-stream precoder designs are
presented in Section~\ref{sec:communication_performance}.


The signal received at the $p$-th port of user $k$ served by the typical BS is
\begin{equation}
	\begin{aligned}
		y_{k,p}
		={}&
		\sqrt{d_{k,o}^{-\alpha_c}P_{o,c}}\,
		\mathbf{h}_{k,p,o}
		\mathbf{w}_{o,c}s_{o,c}
		\\
		&+
		\sum_{i=1}^{K}
		\sqrt{d_{k,o}^{-\alpha_c}P_{o,i}}\,
		\mathbf{h}_{k,p,o}
		\mathbf{w}_{o,i}s_{o,i}
		+ I_{k,p}
		+ n_k,
	\end{aligned}
	\label{eq:received_signal}
\end{equation}
where $\mathbf{h}_{k,p,o}\in\mathbb{C}^{1\times N_{s}}$
is the channel from typical BS to user $k$ at port $p$ $\mathbf{h}_{k,p,o}=\left[h_{k,p,1},.....,h_{k,p,N_{s}}\right]$, $d_{k,o}$ denotes the distance between user $k$ and the typical
BS, $\alpha_c$ is the communication path-loss exponent, $I_{k,p}=\underset{l\in\Phi_{b}\setminus o}{\sum}\sqrt{d_{k,l}^{-\alpha_c}}\mathbf{h}_{k,p,l}\mathbf{x}_{l}$ is the inter-cell interference, and $n_{k}\sim\mathcal{CN}\left(\text{0, }\sigma_{k}^{2}\right)$ is the additive \textit{\emph{white}} Gaussian noise (AWGN) at the user.

At each user, the common stream is decoded first while all private
streams and inter-cell transmissions are treated as interference.
Thus, the common-stream SINR at the $p$-th port of user $k$ is
\begin{equation}
	\gamma_{k,p}^{c}
	=
	\frac{
		d_{k,o}^{-\alpha_c}P_{o,c}
		\left|
		\mathbf{h}_{k,p,o}
		\mathbf{w}_{o,c}
		\right|^{2}
	}{
		\sum_{i=1}^{K}
		d_{k,o}^{-\alpha_c}P_{o,i}
		\left|
		\mathbf{h}_{k,p,o}
		\mathbf{w}_{o,i}
		\right|^{2}
		+
		\underset{l\in\Phi_{b}\setminus o}{\sum}\eta_{k,p,l}
		+
		\sigma_k^{2}
	}.
	\label{eq:common_stream_sinr}
\end{equation}
where $\eta_{k,p,l}=
d_{k,l}^{-\alpha_c}
\left\|
\mathbf{h}_{k,p,l}
\mathbf{W}_{l}
\right\|^{2}$ is the interference power generated by BS $l$ at port $p$ of user $k$.

After successfully decoding the common stream, user $k$ removes it through ideal successive interference cancellation (SIC). The resulting private-stream SINR is
\begin{equation}
	\gamma_{k,p}^{p}
	=
	\frac{
		d_{k,o}^{-\alpha_c}P_{o,k}
		\left|
		\mathbf{h}_{k,p,o}
		\mathbf{w}_{o,k}
		\right|^{2}
	}{
		\sum_{i\neq k}^{K}
		d_{k,o}^{-\alpha_c}P_{o,i}
		\left|
		\mathbf{h}_{k,p,o}
		\mathbf{w}_{o,i}
		\right|^{2}
		+
		\underset{l\in\Phi_{b}\setminus o}{\sum}\eta_{k,p,l}
		+
		\sigma_k^{2}
	}.
	\label{eq:private_stream_sinr}
\end{equation}

Therefore, the achievable common and private stream rates of user $k$ are
\begin{equation}
	R_k^{c}
	=
	\log_2\left(
	1+\gamma_{k,p}^{c}
	\right),
	\qquad
	R_k^{p}
	=
	\log_2\left(
	1+\gamma_{k,p}^{p}
	\right).
	\label{eq:individual_rsma_rates}
\end{equation}

Since the common stream must be decoded by all scheduled users, its achievable rate is limited by the user with the lowest common rate:
\begin{equation}
	R_c
	=
	\min_{1\leq k\leq K}
	R_k^{c}.
	\label{eq:common_rate}
\end{equation}

The instantaneous RSMA sum rate of the typical cell is consequently
given by
\begin{equation}
	R_{\mathrm{sum}}^{\mathrm{RSMA}}
	=
	R_c
	+
	\sum_{i=1}^{K}
	R_i^{p}.
	\label{eq:rsma_sum_rate}
\end{equation}

\LyXZeroWidthSpace{}

\subsection{Sensing System}
The reflected echo signal from target $m$ received at the typical
BS is expressed as

\begin{equation}
	\mathbf y_{m,o}
	=
	\mathbf G_{m,o}\mathbf x_o
	+
	\underbrace{
		\sum_{l\in\Phi_b\setminus\{o\}}
		\sqrt{d_{l,o}^{-\alpha_s}}
		\mathbf H_{l,o}\mathbf x_l
	}_{\text{interference from downlink BSs}}
	+
	\mathbf n_b.
	\label{eq:sensing_received_signal}
\end{equation}

where, $d_{l,o}$ is the distance between interfering BS $l$ and the
typical BS, $\alpha_s$ is the sensing path-loss exponent, $\mathbf x_l$ is the signal transmitted by BS $l$ with power $P_l$, $\mathbf{H}_{l,o}\in\mathbb{C}^{N_r\times N_s}$ denotes the channel from BS $l$ to typical BS. and $\mathbf{n}_b
\sim
\mathcal{CN}
\left(
\mathbf{0},
\sigma_b^{2}\mathbf{I}_{N_r}
\right)$ models the aggregate receiver disturbance, including thermal noise, residual interference and clutter \cite{10490156}.

The typical BS applies a sensing receive beamformer
$\mathbf{w}_{r}\in\mathbb{C}^{N_r\times1}$, normalized such that $
	\left\|
	\mathbf{w}_{r}
	\right\|^{2}
	=
	1.$
Since the common stream is employed as the desired sensing waveform,
the receiver performs matched filtering with respect to $s_{o,c}$.
The private stream echoes are assumed to be suppressed by the
matched-filtering operation, while any residual components are
included in $\mathbf{n}_b$. The resulting filtered sensing signal
is
\begin{equation}
	\begin{aligned}
		\widetilde{y}_{m,o}
		={}&
		\sqrt{P_{o,c}}\,
		\mathbf{w}_{r}^{H}
		\mathbf{G}_{m,o}
		\mathbf{w}_{o,c}s_{o,c}
		\\
		&+
		\sum_{l\in\Phi_{b}\setminus\{o\}}
		\sqrt{d_{l,o}^{-\alpha_s}}
		\mathbf{w}_{r}^{H}
		\mathbf{H}_{l,o}
		\mathbf{x}_{l}
		+
		\mathbf{w}_{r}^{H}
		\mathbf{n}_b.
	\end{aligned}
	\label{eq:filtered_sensing_signal}
\end{equation}

Using the sensing channel in \eqref{eq:sensing_channel}, the
instantaneous sensing SINR for target $m$ is
\begin{equation}
	\gamma_{m}^{s}
	=
	\frac{
		P_{o,c}\beta_{m,o}\zeta_{m,o}
		\left|
		\mathbf{w}_{r}^{H}
		\mathbf{a}_{r}(\theta_m)
		\right|^{2}
		\left|
		\mathbf{a}_{t}^{H}(\theta_m)
		\mathbf{w}_{o,c}
		\right|^{2}
	}{
		\sum_{l\in\Phi_{b}\setminus\{o\}}
		d_{l,o}^{-\alpha_s}
		\left|
		\mathbf{w}_{r}^{H}
		\mathbf{H}_{l,o}
		\mathbf{x}_{l}
		\right|^{2}
		+
		\sigma_b^{2}
		\left\|
		\mathbf{w}_{r}
		\right\|^{2}
	}.
	\label{eq:sensing_sinr}
\end{equation}

A target is successfully detected when $\gamma_{m}^{s}\geq\Gamma_{s}$,
where $\Gamma_{s}$ is the sensing SINR threshold.

\section{Communication Performance}
\label{sec:communication_performance}

In this section, we characterize the ergodic common and private
rates of the typical cell using SG tools. Following
\cite{refport}, the typical BS is assumed to acquire instantaneous
channel state information (CSI) only for the reference port $p=1$ of
each scheduled user, i.e.,
$\{\mathbf{h}_{k,1,o}\}_{k=1}^{K}$. The channels of the remaining
candidate ports are not instantaneously known at the BS and are instead
characterized through the spatial-correlation model in
\eqref{eq:fas_channel}. Each user locally probes its candidate ports
and selects one active port based on the adopted port-selection
criterion. Hence, the precoders are designed using the reference-port
CSI, whereas the FAS gain is obtained through user-side port selection.
The aggregated reference-port channel matrix is defined as
$\mathbf{H}_{1,o}
=
\left[
\mathbf{h}_{1,1,o}^{H},
\ldots,
\mathbf{h}_{K,1,o}^{H}
\right]
\in
\mathbb{C}^{N_s\times K}.$


We employ a maximum-ratio transmission (MRT) joint communication and sensing beamformer for the common stream and zero-forcing (ZF) precoding for the private streams. The common-stream MRT beamformer is given by
\begin{equation}
	\mathbf{w}_{o,c}
	=
	\frac{
		\displaystyle
		\alpha
		\sum_{i=1}^{K}
		\mathbf{h}_{i,1,o}^{H}
		+
		\rho\mathbf{a}_{t}(\theta_m)
	}{
		\displaystyle
		\left\|
		\alpha
		\sum_{i=1}^{K}
		\mathbf{h}_{i,1,o}^{H}
		+
		\rho\mathbf{a}_{t}(\theta_m)
		\right\|
	},
	\label{eq:common_precoder}
\end{equation}
where $\rho\in[0,1]$ is the communication--sensing tradeoff parameter,
and $\alpha=\sqrt{1-\rho^{2}}$ is the corresponding communication weighting
coefficient. Increasing $\rho$ therefore increases the
sensing contribution to the common beamformer while reducing
the communication contribution.


For the private streams, the ZF precoding matrix is
defined as
\begin{equation}
	\widetilde{\mathbf{W}}_{o}^{\mathrm{zf}}
	=
	\mathbf{H}_{1,o}
	\left(
	\mathbf{H}_{1,o}^{H}
	\mathbf{H}_{1,o}
	\right)^{-1},
	\label{eq:zf_precoding_matrix}
\end{equation}
and
\begin{equation}
	\mathbf{w}_{o,i}^{\mathrm{zf}}
	=
	\frac{
		\widetilde{\mathbf{w}}_{o,i}^{\mathrm{zf}}
	}{
		\left\|
		\widetilde{\mathbf{w}}_{o,i}^{\mathrm{zf}}
		\right\|
	},
	\qquad
	i=1,\ldots,K.
	\label{eq:normalized_zf_precoder}
\end{equation}

\subsection{FAS Port Selection}
\label{subsec:fas_port_selection}

For user $k$, the private gain at port $p$ is defined as
\begin{equation}
	G_{k,p}^{p}
	=
	\left|
	\mathbf{h}_{k,p,o}
	\mathbf{w}_{o,k}^{\mathrm{zf}}
	\right|^{2}.
	\label{eq:private_effective_gain}
\end{equation}

The user selects the port that maximizes its private gain. Consequently, the maximum selected gain is:
\begin{equation}
	G_{k,\max}^{p}
	=
	\max_{1\leq p\leq P}
	G_{k,p}^{p}.
	\label{eq:maximum_private_gain}
\end{equation}

\subsection{Ergodic Private Rate--Upper Bound}
\label{subsec:private_rate_upper_bound}

Substituting the ZF beamformer \eqref{eq:normalized_zf_precoder} in
\eqref{eq:private_stream_sinr} and evaluating the resulting SINR at
the selected port yields $\gamma_k^{p}$.

The ergodic private rate of user $k$ is defined as
\begin{equation}
	\bar{R}_{k}^{p}
	=
	\mathbb{E}
	\left[
	\log_{2}
	\left(
	1+\gamma_{k}^{p}
	\right)
	\right].
	\label{eq:ergodic_private_rate}
\end{equation}
The exact ergodic private and common rates are analytically intractable due to the coupled effects of FAS port selection, multiuser precoding, and aggregate inter-cell interference. Therefore, tractable upper and lower bounds are derived.

\begin{theorem}
	An upper bound on the ergodic private-stream rate of user $k$ is
	given by
	\begin{equation}
		\bar{R}_{k,\mathrm{Ub}}^{p}
		=
		\frac{1}{\ln 2}
		\sum_{n=1}^{N}
		\frac{H_n}{z_n}
		\left(
		1-x_{k}^{p}
		\right)
		e^{z_n}
		y_{k}^{p},
		\label{eq:private_rate_upper_bound}
	\end{equation}
	\emph{where}
	\begin{equation}
		\begin{aligned}
			x_k^p
			={}&
			\frac{2(\pi\lambda_u)^k z_nP_{o,k}}
			{(k-1)!\Gamma(m)}
			\sum_{r=0}^{P}(-1)^r\binom{P}{r}
			\\
			&\times
			\sum_{q=0}^{r(m-1)}
			c_{r,q}\Gamma(m+q)
			\int_{0}^{\infty}
			\frac{r_o^{2k-1-\alpha_c}e^{-\pi\lambda_u r_o^2}}
			{\left(r_o^{-\alpha_c}P_{o,k}z_n+r\right)^{m+q}}
			\,\mathrm{d}r_o .
		\end{aligned}
		\label{eq:xkp}
	\end{equation}
	\begin{equation}
		\begin{aligned}
			y_{k}^{p}
			={}&
			\frac{
				2(\pi\lambda_{u})^{k}
				e^{-z_n\sigma_{k}^{2}}
			}{
				(k-1)!
			}
			\\
			&\times
			e^{
				-2\pi\lambda_{b}
				\int_{r_{l}}^{\infty}
				\left[
				1-
				\left(
				1+d_{k,l}^{-\alpha_c}z_{n}
				\right)^{-K}
				\right]
				d_{k,l}\,
				\mathrm{d}d_{k,l}
			}
			\\
			&\times
			\int_{0}^{\infty}
			\left(
			1+r_{o}^{-\alpha_c}P_{o,k}z_{n}
			\right)^{1-K}
			r_{o}^{\,2k-1}
			e^{-\pi\lambda_{u}r_{o}^{2}}
			\,\mathrm{d}r_{o}.
		\end{aligned}
		\label{eq:ykp}
	\end{equation}
	\emph{while $z_n$ and $H_n$ are the $n$-th Gauss--Laguerre abscissa and
		weight, respectively, $r_o=d_{k,o}$, and $N$ denotes the quadrature order, $m=N_{s}-K+1$,
		$c_{r,q}=\underset{\underset{0\leq k_{i}\leq m-1}{k_{1}+..+k_{r}=q}}{\sum}\frac{1}{k_{1}!k_{2}!..k_{r}!}$.}
\end{theorem}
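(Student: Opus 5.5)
The plan is to start from the Hamdi/Laplace identity for the ergodic rate with an interference-plus-noise denominator,
\begin{equation*}
\mathbb{E}\left[\ln\left(1+\frac{S}{I}\right)\right]=\int_{0}^{\infty}\frac{1}{z}\left(1-\mathbb{E}\left[e^{-zS}\right]\right)\mathbb{E}\left[e^{-zI}\right]\mathrm{d}z ,
\end{equation*}
valid when $S$ and $I$ are independent. Here $S=d_{k,o}^{-\alpha_c}P_{o,k}G_{k,\max}^{p}$ and $I$ is the sum of the residual intra-cell private interference, the inter-cell term $\sum_{l}\eta_{k,p,l}$ and $\sigma_k^2$. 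In reality $S$ and $I$ share the serving distance $r_o=d_{k,o}$ and the selected port. The upper bound comes from decoupling them: I would treat the two expectations separately, averaging each over $r_o$ on its own. I would also model the ZF residual leakage at the selected (non-reference) port in a worst-favourable way, which together with Jensen-type steps yields an upper bound. After this, I will convert $\int_0^\infty(\cdot)\,\mathrm{d}z$ to Gauss--Laguerre form by writing the integrand as $e^{-z}\,[e^{z}(\cdot)]$. This produces the weights $H_n$, the factor $e^{z_n}/z_n$, and the prefactor $1/\ln 2$.

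\textbf{Step 1: signal term $x_k^p$.} Because of ZF with $N_s$ antennas and $K$ users, the reference-port gain $|\mathbf h_{k,1,o}\mathbf w^{\mathrm{zf}}_{o,k}|^2$ is Gamma$(m,1)$ with $m=N_s-K+1$. I would approximate the $P$ correlated port gains by i.i.d.\ Gamma$(m,1)$ variables, absorbing $\mu$ through moment matching. The CDF of the maximum is then $F(g)=\bigl(1-e^{-g}\sum_{j=0}^{m-1}g^j/j!\bigr)^{P}$. Expanding this binomially and collecting the $r$-fold products of the truncated exponential series gives the coefficients $c_{r,q}$. Differentiating and integrating $e^{-zs g}$ against the resulting density produces terms of the form $\Gamma(m+q)/(zs+r)^{m+q}$; equivalently, I would use $1-\mathbb{E}[e^{-zS}]=\int zs\,e^{-zsg}(1-F(g))\,\mathrm{d}g$, which explains the extra factor $z_nP_{o,k}r_o^{-\alpha_c}$ in $x_k^p$. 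Finally I would average over $r_o$, the distance to the $k$-th nearest user in the PPP $\Phi_u$, whose density is $2(\pi\lambda_u)^k r_o^{2k-1}e^{-\pi\lambda_u r_o^2}/(k-1)!$. Carrying the $1/\Gamma(m)$ normalization through should reproduce \eqref{eq:xkp}.

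\textbf{Step 2: interference term $y_k^p$.} The noise contributes $e^{-z\sigma_k^2}$. For the intra-cell leakage, I would bound the $K-1$ private streams at the non-reference port as independent exponential gains with power $P_{o,k}$. This gives $(1+r_o^{-\alpha_c}P_{o,k}z)^{1-K}$, which I then average over the same $k$-th nearest distance law. For inter-cell interference, $\|\mathbf h_{k,p,l}\mathbf W_l\|^2$ is a sum of $K$ unit exponentials, i.e.\ Gamma$(K,1)$ under normalized power. The PGFL of the PPP $\Phi_b$ outside an exclusion radius $r_l$ then gives $\exp\bigl(-2\pi\lambda_b\int_{r_l}^\infty[1-(1+x^{-\alpha_c}z)^{-K}]x\,\mathrm{d}x\bigr)$. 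Multiplying these pieces and inserting $z=z_n$ yields \eqref{eq:ykp}.

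\textbf{Main obstacle.} The hardest part is justifying the bound direction. The decoupling of $S$ and $I$ through the common $r_o$ and the common port must be shown to give an upper rather than a lower bound. One route is to note that both $1-\mathbb{E}[e^{-zS}\mid r_o]$ and $\mathbb{E}[e^{-zI}\mid r_o]$ are increasing in $r_o$, so Chebyshev's association inequality gives $\mathbb{E}[AB]\ge\mathbb{E}[A]\mathbb{E}[B]$. That inequality points the wrong way, so the upper bound must instead come from treating the port-selected interference optimistically, i.e.\ from the leakage and selection modelling. I would first try to argue that the i.i.d.\ Gamma order-statistic model stochastically dominates the true correlated maximum, so that it over-estimates the signal. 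This makes the final expression an upper bound, with any remaining gap controlled by the quadrature order $N$.
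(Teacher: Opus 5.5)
Your computational route matches the paper's Appendix~A step for step: Hamdi's identity with the signal and interference Laplace transforms averaged \emph{separately} over the $k$-th nearest-user distance, i.i.d.\ Gamma$(m,1)$ port gains expanded into the $c_{r,q}$, Gamma$(K-1,1)$ intra-cell leakage, Gamma$(K,1)$ inter-cell gains through the PGFL, and Gauss--Laguerre quadrature. The paper does no moment matching in $\mu$; it simply drops the correlation, which is why $\mu$ is absent.

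\textbf{The bound-direction argument rests on a false premise.} This is also the one place where the paper itself gives no argument. $A(r_o)=1-\mathbb{E}[e^{-zS}\mid r_o]$ is \emph{decreasing} in $r_o$: it tends to $1$ as $r_o\to0$ and to $0$ as $r_o\to\infty$. In contrast, $B(r_o)=\mathbb{E}[e^{-zI}\mid r_o]=(1+zr_o^{-\alpha_c}P_{o,k})^{1-K}\mathcal{L}_{Y_{\mathrm{inter}}}(z)e^{-z\sigma_k^2}$ is nondecreasing, since the PGFL factor in $y_k^p$ does not depend on $r_o$. For oppositely monotone functions of the single variable $r_o$, Chebyshev's inequality gives $\mathbb{E}[AB]\le\mathbb{E}[A]\,\mathbb{E}[B]$ for every $z>0$. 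Hamdi's identity holds conditionally on $r_o$ and the integrand is nonnegative, so integrating against $1/z$ shows that the decoupling is itself an upper-bounding step, with $\mathbb{E}[A]=1-x_k^p$ and $\mathbb{E}[B]=y_k^p$ at $z=z_n$. You discarded exactly the argument that works.

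Your replacement argument does not hold together either. The leakage you adopt in Step~2 ($K-1$ unit-mean exponentials at power $P_{o,k}$) is not an optimistic model. ZF designed on the reference port gives $\mathbf{h}_{k,p,o}\mathbf{w}^{\mathrm{zf}}_{o,i}=\sqrt{1-\mu^2}\,\tilde{\mathbf{h}}_{k,p}\mathbf{w}^{\mathrm{zf}}_{o,i}$ for $i\neq k$, where $\tilde{\mathbf{h}}_{k,p}$ is the port-specific part of the correlated channel, and zero leakage at $p=1$. So this leakage model cannot be the source of the upper bound. Stochastic dominance of the i.i.d.\ maximum covers only the signal factor, not the decoupling you believed went the wrong way.

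A coherent justification has three parts:
(i) Given the reference channels, the port gains are independent and stochastically increasing in the common gain $|\mathbf{h}_{k,1,o}\mathbf{w}^{\mathrm{zf}}_{o,k}|^2$, and each marginal is dominated by Gamma$(m,1)$. Hence $\Pr(G^p_{k,\max}\le x)\ge F(x)^P$, and the i.i.d.\ model over-estimates $1-\mathcal{L}_X$.
(ii) The $r_o$-decoupling is an upper bound by Chebyshev, as above.
(iii) The leakage and inter-cell Gamma laws are modelling assumptions, exactly as in the paper.

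\textbf{The closing step for $x_k^p$ will fail.} Your claim that ``carrying the $1/\Gamma(m)$ normalization through should reproduce $x_k^p$'' does not go through, because the CDF representation has no $1/\Gamma(m)$ to carry. With $s=zr_o^{-\alpha_c}P_{o,k}$ one gets $s\int_0^\infty e^{-(s+r)x}x^q\,\mathrm{d}x=s\,q!/(s+r)^{q+1}$. Your calculation therefore yields $\mathbb{E}[e^{-sG}]=s\sum_{r=0}^{P}(-1)^r\binom{P}{r}\sum_{q}c_{r,q}\,q!/(s+r)^{q+1}$, which coincides with the conditional form inside $x_k^p$ only when $m=1$.

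For $m\ge2$ the stated form is not even a Laplace transform. With $P=1$, $m=2$, your route correctly returns $(1+s)^{-2}$, whereas the stated form returns $1/s-s/(1+s)^2-2s/(1+s)^3$, which exceeds $1$ as $s\to0$. Differentiating instead (density $PF^{P-1}f$) does produce $\Gamma(m+q)/\Gamma(m)$ and the exponent $m+q$. However, it comes with $\binom{P-1}{r}$, $(s+r+1)$, an overall factor $P$, and no prefactor $s$.

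The paper's Appendix~A makes the same unexplained splice of these two routes, so the defect is inherited from the statement. Your write-up should derive the CDF-based expression explicitly rather than assert that it matches $x_k^p$.
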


\begin{IEEEproof}
The proof is provided in Appendix A.
\end{IEEEproof}

\subsection{Ergodic Private Rate Lower Bound}
\label{subsec:private_rate_lower_bound}

A lower bound on the ergodic private rate is obtained by
applying Jensen's inequality to the logarithmic SINR expression.

\begin{theorem}
	A lower bound on the ergodic private-stream rate of user $k$ is
	denoted by
	\begin{align}
		R_{k,\mathrm{Lb}}^{p}
		={}&
		\frac{1}{\ln 2}
		\sum_{n=1}^{N} H_{n}
		\ln\Biggl(
		1+
		\nonumber\\[-1mm]
		&\qquad
		\frac{
			P_{o,k}r_{o}^{-\alpha_c}
			+
			e^{\psi\left(N_{s}-K+1\right)}
		}{
			P_{o,k}r_o^{-\alpha_c}(K-1)
			+
			\frac{2\pi\lambda_{b}K}{\alpha_c-2}
			r_{o}^{\,2-\alpha_c}
			+
			\sigma_{k}^{2}
		}
		\Biggr)
		\nonumber\\
		&\times
		\frac{2(\pi\lambda_{u})^{k}}{(k-1)!}
		r_{o}^{\,2k-1}
		e^{-\pi\lambda_{u}r_{o}^{2}} .
		\label{eq:private_rate_lower_bound}
	\end{align}
	\emph{where $\psi\left(.\right)$ is the digamma function, }$H_{n}$
	and $r_{n}$ are the $n^{th}$ zero and the weighting factor of the
	Laguerre polynomials, respectively.
\end{theorem}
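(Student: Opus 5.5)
The plan is to condition on the serving distance $r_o=d_{k,o}$ and apply Jensen's inequality in the form $\mathbb{E}[\log(1+X/Y)]\geq \log(1+e^{\mathbb{E}[\ln X]}/\mathbb{E}[Y])$. This follows from the convexity of $u\mapsto\ln(1+e^{u})$ together with the convexity of $\ln(1+a/y)$ in $y$. Here $X$ collects the desired-signal terms at the selected port and $Y$ the interference-plus-noise terms. After conditioning, we average over $r_o$ using the distance to the $k$-th nearest user of the PPP $\Phi_u$, whose density is $f(r_o)=\frac{2(\pi\lambda_u)^k}{(k-1)!}r_o^{2k-1}e^{-\pi\lambda_u r_o^2}$. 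Finally, we evaluate that outer integral numerically with the Gauss--Laguerre rule, so that the abscissae appear as the $r_o$ (i.e., $r_n$) values and the weights as $H_n$. This reproduces the outer structure of \eqref{eq:private_rate_lower_bound}.

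For the numerator, recall that the ZF precoder \eqref{eq:normalized_zf_precoder} is designed on the reference port. By \eqref{eq:fas_channel}, $\mathbf h_{k,p,o}=\sqrt{1-\mu^2}\,\tilde{\mathbf h}+\mu\mathbf h_{k,1,o}$. The effective gain $|\mathbf h_{k,1,o}\mathbf w^{\mathrm{zf}}_{o,k}|^2$ is the standard ZF gain, distributed as $\mathrm{Gamma}(N_s-K+1,1)$, so $\mathbb{E}[\ln(\cdot)]=\psi(N_s-K+1)$. This yields the $e^{\psi(N_s-K+1)}$ term. The port-selection gain and the residual non-reference component contribute an additional unit-mean term, lower bounded by the term $P_{o,k}r_o^{-\alpha_c}$ in the numerator. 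I would split the selected-port gain into the correlated reference part and the independent part, and bound the maximum over ports from below by a single port's value to retain tractability.

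For the denominator, we compute $\mathbb{E}[Y]$. Because the private precoders are zero-forced only with respect to the reference port, intra-cell leakage at an arbitrary port is not nulled. Each of the $K-1$ interfering private streams has unit-mean gain $|\mathbf h\mathbf w_{o,i}|^2$ (isotropic channel, unit-norm independent beam), which gives $P_{o,k}r_o^{-\alpha_c}(K-1)$ under equal private power. The inter-cell term $\mathbb{E}[\sum_l\eta_{k,p,l}]$ follows from Campbell's theorem. Each $\eta$ has mean $K$ (the $K$ streams with unit-norm precoders, with power normalized), and the BSs are taken outside the exclusion radius $r_o$, giving $2\pi\lambda_b K\int_{r_o}^\infty t^{1-\alpha_c}dt=\frac{2\pi\lambda_bK}{\alpha_c-2}r_o^{2-\alpha_c}$ for $\alpha_c>2$. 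Adding $\sigma_k^2$ completes the denominator.

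The main obstacle will be justifying the numerator. The maximum over correlated ports does not have a simple log-moment, and the paper's expression appears to combine the ZF log-moment with a mean-gain term additively. I will first try to bound $\max_p G^p_{k,p}$ from below by the reference-port gain plus a nonnegative correlated residual, and then apply Jensen to each piece separately. A second delicate point is the exclusion-region assumption (interferers beyond $r_o$), which I will adopt as an approximation consistent with the upper-bound proof (Appendix A, the $r_l$ lower limit).
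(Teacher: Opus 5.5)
Your skeleton matches the paper's Appendix~B: you condition on $r_o$, apply Jensen as $\mathbb{E}[\ln(1+X/Y)]\geq\ln(1+e^{\mathbb{E}\ln X}/\mathbb{E}Y)$, take $\psi(N_s-K+1)$ from the ZF Gamma law, build $\mathbb{E}[Y\mid r_o]$ from $K-1$ unit-mean leakage terms, Campbell's theorem with exclusion radius $r_o$ and $\sigma_k^2$, and then average over the $k$-th-nearest-user density. The gap is the numerator step you flag as the main obstacle, which cannot be carried out. Conditioned on $r_o$, $X=P_{o,k}r_o^{-\alpha_c}G_{k,\max}^p$, so $e^{\mathbb{E}[\ln X\mid r_o]}=P_{o,k}r_o^{-\alpha_c}e^{\mathbb{E}\ln G_{k,\max}^p}$. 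Every contribution therefore carries the factor $P_{o,k}r_o^{-\alpha_c}$, and a free-standing dimensionless $e^{\psi(N_s-K+1)}$ added to a received power cannot come from the signal term. Your mechanism for the extra term also fails on its own terms:
\begin{itemize}
\item Jensen cannot be applied to each piece of a sum inside a logarithm. The best available tool, superadditivity of the geometric mean, only gives $e^{\mathbb{E}\ln(A+B)}\geq e^{\mathbb{E}\ln A}+e^{\mathbb{E}\ln B}$.
\item For a nonnegative unit-mean residual $R$, $e^{\mathbb{E}\ln R}\leq\mathbb{E}R=1$, so it contributes at most $P_{o,k}r_o^{-\alpha_c}$, not at least.
\item The natural residual $R=G_{k,\max}^p-G_{k,1}^p$ vanishes whenever the reference port is selected, which has positive probability, so $e^{\mathbb{E}\ln R}=0$.
\end{itemize}
In fact the additive expression is not a lower bound at all. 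As $P_{o,k}\to0$ the private rate tends to $0$, while the displayed ratio tends to $e^{\psi(N_s-K+1)}/\bigl(\frac{2\pi\lambda_bK}{\alpha_c-2}r_o^{2-\alpha_c}+\sigma_k^2\bigr)>0$.

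The paper itself computes $\mathbb{E}[\ln X\mid r_o]=\ln P_{o,k}-\alpha_c\ln r_o+\psi(N_s-K+1)$. Its exponential is the product $P_{o,k}r_o^{-\alpha_c}e^{\psi(N_s-K+1)}$, exactly parallel to the numerator $P_{o,c}r_o^{-\alpha_c}\theta_c e^{\psi(m_c)}$ of Theorem~4. The ``$+$'' in the statement (and in $N_p$ of Appendix~B) contradicts that derivation and should be read as a product. Prove that form rather than trying to manufacture an additive term. Your single-port idea is the right tool here, and it is more rigorous than the paper's bare assertion that $G_{k,\max}^p\sim\Gamma(N_s-K+1,1)$. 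Since ZF nulls interference exactly at the reference port, $G_{k,\max}^p\geq G_{k,1}^p=|\mathbf h_{k,1,o}\mathbf w_{o,k}^{\mathrm{zf}}|^2\sim\Gamma(N_s-K+1,1)$. Hence $\mathbb{E}\ln G_{k,\max}^p\geq\psi(N_s-K+1)$, and monotonicity of $u\mapsto\ln(1+e^{u}c)$ preserves the bound. One further caution on the Jensen step: justify it by convexity of $u\mapsto\ln(1+e^{u})$ at $u=\ln X-\ln Y$ together with $\mathbb{E}\ln Y\leq\ln\mathbb{E}Y$, which needs no independence. The route through convexity of $\ln(1+a/y)$ in $y$ would need $X$ and $Y$ independent given $r_o$. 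That fails at the selected port, since both depend on the same port channel.
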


\begin{IEEEproof}
The proof is provided in Appendix B.
\end{IEEEproof}

\subsection{Ergodic Common Rate-Upper Bound}
\label{subsec:common_rate_upper_bound}

The received SINR of the common part at user $k$ is obtained by substituting (\ref{eq:common_precoder}) and (\ref{eq:zf_precoding_matrix}) in (\ref{eq:common_stream_sinr})

\begin{theorem}
	An upper bound on the ergodic common rate of user $k$ is
	given by
	\begin{equation}
		\bar{R}_{k,\mathrm{Ub}}^{c}
		=
		\frac{1}{\ln 2}
		\sum_{n=1}^{N}
		\frac{H_n}{z_n}
		\left(
		1-x_{k}^{c}
		\right)
		e^{z_n}
		y_{k}^{c}.
		\label{eq:common_rate_upper_bound}
	\end{equation}	
	
	\emph{where}
	
	\begin{align}
		x_{k}^{c}
		={}&
		\frac{2(\pi\lambda_{u})^{k}}{(k-1)!}
		\int_{0}^{\infty}
		\left(
		1+\theta_{c}z_n r_{o}^{-\alpha_c}P_{o,c}
		\right)^{-m_{c}}
		\nonumber\\
		&\times
		r_{o}^{\,2k-1}
		e^{-\pi\lambda_{u}r_{o}^{2}}
		\,\mathrm{d}r_{o},
		\label{eq:xkc}
		\\[1mm]
		y_{k}^{c}
		={}&
		\frac{2(\pi\lambda_{u})^{k}
			e^{-\sigma_{k}^{2}}}{(k-1)!}
		\nonumber\\
		&\times
		e^{-2\pi\lambda_{b}
			\int_{r_{l}}^{\infty}
			\left[
			1-
			\left(
			1+d_{k,l}^{-\alpha_c}z_n
			\right)^{-K}
			\right]
			d_{k,l}\,
			d_{d{k,l}}}
		\nonumber\\
		&\times
		\int_{0}^{\infty}
		\left(
		1+r_{o}^{-\alpha_c}P_{o,k}z_n
		\right)^{-K}
		r_{o}^{\,2k-1}
		e^{-\pi\lambda_{u}r_{o}^{2}}
		\,\mathrm{d}r_{o}
		\label{eq:ykc}
	\end{align}
	\emph{while $z_n$ and $H_n$ are the $n$-th Gauss--Laguerre abscissa and
		weight, respectively, and $N$ denotes the quadrature order. $m_c$ and $\theta_c$ are the shape and scale parameters, respectively, of the moment-matched Gamma distribution $G_c\sim\Gamma(m_c,\theta_c)$, with}
		$m_c
		=
		\frac{\left(\mathbb{E}[G_c]\right)^2}
		{\operatorname{Var}(G_c)},
		\theta_c
		=
		\frac{\operatorname{Var}(G_c)}
		{\mathbb{E}[G_c]},$ and
		$\mathbb{E}[G_c]
		\approx
		1+
		\frac{N_s\alpha^2}
		{K\alpha^2+\rho^2},
		\operatorname{Var}(G_c)
		\approx
		1+
		\frac{2N_s\alpha^2}
		{K\alpha^2+\rho^2}.$
\end{theorem}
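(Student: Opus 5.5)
The plan mirrors the argument behind the private-rate upper bound (Theorem 1), with the FAS order-statistics term replaced by a moment-matched Gamma model for the common-stream gain. The starting point is the Hamdi-type identity
\[
\mathbb{E}[\ln(1+X/Y)]=\int_0^\infty \frac{1}{z}\left(\mathcal{M}_Y(z)-\mathcal{M}_{X+Y}(z)\right)dz ,
\]
valid for nonnegative $X,Y$. Here $\mathcal{M}_Y(z)=\mathbb{E}[e^{-zY}]$. We set $X=r_o^{-\alpha_c}P_{o,c}G_c$, with $G_c=|\mathbf{h}_{k,p,o}\mathbf{w}_{o,c}|^2$ evaluated at the selected port. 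We set $Y=\sum_i r_o^{-\alpha_c}P_{o,i}|\mathbf{h}_{k,p,o}\mathbf{w}_{o,i}|^2+\sum_l\eta_{k,p,l}+\sigma_k^2$. The upper-bound character enters in two places. First, $X$ and $Y$ are treated as independent, so $\mathcal{M}_{X+Y}=\mathcal{M}_X\mathcal{M}_Y$. Second, the selected port is replaced by an idealised best-case gain for the desired term, while the interference is evaluated with unit-variance entries. This gives the factored form $\int_0^\infty z^{-1}(1-\mathcal{M}_X(z))\mathcal{M}_Y(z)\,dz$. The Gauss--Laguerre rule is then applied after writing the integrand as $e^{-z}\cdot[e^{z}(\cdot)]$. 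This produces the weights $H_n/z_n\,e^{z_n}$ and the factor $1/\ln 2$ in \eqref{eq:common_rate_upper_bound}.

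The next step is to compute $\mathcal{M}_X$, which is where the moment matching enters. I would write $\mathbf{w}_{o,c}$ from \eqref{eq:common_precoder}. The user's own reference-port channel $\mathbf{h}_{k,1,o}$ appears in the numerator, and by \eqref{eq:fas_channel} it is correlated with port $p$ through $\mu$. I would compute $\mathbb{E}[G_c]$ and $\mathbb{E}[G_c^2]$ to leading order. The normalising norm is approximated by its mean, $\mathbb{E}\|\alpha\sum_i\mathbf{h}_{i,1,o}^H+\rho\mathbf{a}_t\|^2\approx N_s(K\alpha^2+\rho^2)$. The numerator contributes a coherent term $\alpha^2 N_s^2$ from the user's own channel, plus unit-order noncoherent terms. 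This yields $\mathbb{E}[G_c]\approx1+N_s\alpha^2/(K\alpha^2+\rho^2)$. The fourth moments of complex Gaussians give the stated variance. Matching a $\Gamma(m_c,\theta_c)$ then gives $\mathbb{E}[e^{-zX}\mid r_o]=(1+\theta_c z r_o^{-\alpha_c}P_{o,c})^{-m_c}$. Averaging over the $k$-th nearest-user distance pdf $\frac{2(\pi\lambda_u)^k}{(k-1)!}r_o^{2k-1}e^{-\pi\lambda_u r_o^2}$ gives $x_k^c$ in \eqref{eq:xkc}.

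For $\mathcal{M}_Y$ I would use three facts, treating the terms as independent:
\begin{itemize}
\item The ZF private precoders are normalised and independent of the port-$p$ channel component that is orthogonal to the reference port. Each $|\mathbf{h}_{k,p,o}\mathbf{w}_{o,i}|^2$ is therefore approximated as unit-mean exponential, which gives the factor $(1+r_o^{-\alpha_c}P_{o,k}z)^{-K}$ under equal private powers. This factor is averaged over $r_o$ as before.
\item Each interferer $\eta_{k,p,l}$ is a sum of $K$ (or $K+1$, absorbed) unit exponentials. This makes it Gamma$(K,1)$, with Laplace transform $(1+d_{k,l}^{-\alpha_c}z)^{-K}$. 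The PGFL of the PPP $\Phi_b$ outside the exclusion radius $r_l$ then gives the exponential term in \eqref{eq:ykc}.
\item The noise contributes $e^{-z\sigma_k^2}$, evaluated at the quadrature nodes.
\end{itemize}

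I expect the main obstacle to be the moment computation for $G_c$. The beamformer norm is random and correlated with the numerator. The sensing steering vector is deterministic and makes the problem non-isotropic. The FAS correlation couples port $p$ to the reference port. I would first replace the norm by its expectation, a standard large-$N_s$ deterministic-equivalent step. I would then exploit the orthogonality of the zero-mean terms, so that only the self-term $|\mathbf{h}_{k,1,o}\mathbf{h}_{k,1,o}^H|^2$ contributes at order $N_s^2$. The $\mu$-dependence is absorbed into the upper bound by taking $\mu\to1$ for the desired term. This yields the simple approximations stated in the theorem.
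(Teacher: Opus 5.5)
Your skeleton matches Appendix~C of the paper. Both use the Hamdi-type identity with the joint MGF replaced by $\mathcal{M}_X\mathcal{M}_Y$, which also decouples the shared $r_o$, so $x_k^c$ and the intra part of $y_k^c$ carry separate distance averages. Both then use a moment-matched $\Gamma(m_c,\theta_c)$ for $G_c$, a $\Gamma(K,1)$ intra term, the PGFL for the inter-cell term, and Gauss--Laguerre quadrature. Your noise factor $e^{-z_n\sigma_k^2}$ is the right one; the $e^{-\sigma_k^2}$ in the statement is a typo.

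The problems are in what you add beyond the paper. The paper does not derive $\mathbb{E}[G_c]$ and $\mathrm{Var}(G_c)$; it adopts them as ``simple usable approximations.'' Your derivation does not reproduce the stated variance:
\begin{itemize}
\item With $\mu=1$ and $\|\mathbf{v}\|^2$ replaced by $N_s(K\alpha^2+\rho^2)$, the ratio of means does give the stated mean. For general $\mu$ it gives $1+\mu^2N_s\alpha^2/(K\alpha^2+\rho^2)$.
\item The fourth-moment computation, however, gives a leading-order variance of
\[
\frac{2N_s\alpha^2}{K\alpha^2+\rho^2}\cdot\frac{(K+1)\alpha^2+\rho^2}{K\alpha^2+\rho^2},
\]
not $1+2N_s\alpha^2/(K\alpha^2+\rho^2)$.
\item Check $K=1$, $\rho=0$, $\mu=1$. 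Then $G_c=\|\mathbf{h}_{k,1,o}\|^2$ exactly, with variance $N_s$. Your route gives about $4N_s$, and the stated formula gives $1+2N_s$.
\end{itemize}
The cause is that fluctuations of $\|\mathbf{v}\|$ move $G_c$ by $O(\sqrt{N_s})$, the same order as its standard deviation. They are also correlated with the numerator, so the deterministic-equivalent replacement is not harmless at second order. You should state these moments as a modelling assumption, as the paper does, rather than claim to derive them.

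Your justification of the upper-bound direction also does not hold. Taking $\mu\to1$ for the desired term while keeping unit-mean exponentials for the interference is internally inconsistent:
\begin{itemize}
\item ZF nulls the reference-port component, so for $i\neq k$, $\mathbb{E}|\mathbf{h}_{k,p,o}\mathbf{w}_{o,i}^{\mathrm{zf}}|^2=1-\mu^2$.
\item The user's own private stream instead has mean $\mu^2(N_s-K+1)+1-\mu^2$.
\item At $\mu=1$ the intra factor would therefore be $(1+r_o^{-\alpha_c}P_{o,k}z)^{-(N_s-K+1)}$, not $(\cdot)^{-K}$.
\end{itemize}
Neither substitution is a stochastic-ordering inequality. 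The ``best-case port'' reading also does not match the formula: $x_k^c$ contains no maximum over the $P$ ports, unlike $x_k^p$. Moreover, the port is chosen by the private gain, and the bias this selection induces on $G_c$ and on the interference is not modelled.

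Only the factorization step can be made rigorous. Since $\partial^2\ln(1+x/y)/\partial x\,\partial y=-(x+y)^{-2}<0$, replacing a positively associated pair $(X,Y)$ by independent copies with the same marginals can only increase $\mathbb{E}\ln(1+X/Y)$. The shared $r_o$ induces such association. You would still need to handle the dependence between $G_c$ and the intra term through $\mathbf{h}_{k,p,o}$. The paper gives no inequality here either. So your result, like the paper's, is a moment-matched approximation labelled an upper bound, and your proposal should not claim more than that.
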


\begin{IEEEproof}
The proof is provided in Appendix C.
\end{IEEEproof}


\subsection{Ergodic Common Rate-Lower bound}
\label{subsec:common_rate_lower_bound}

A lower bound on the ergodic common rate is obtained by applying
Jensen's inequality to the moment-matched common gain.
\begin{theorem}
	A lower bound on the ergodic common rate of user $k$ is denoted
	by
	\begin{align}
		R_{k,\mathrm{Lb}}^{c}
		={}&
		\frac{1}{\ln 2}
		\sum_{n=1}^{N}
		H_{n}
		\ln\Biggl(
		1+
		\nonumber\\[-1mm]
		&\qquad
		\frac{
			P_{o,c}r_{o}^{-\alpha_c}\theta_{c}
			e^{\psi\left(m_{c}\right)}
		}{
			KP_{o,k}r_{o}^{-\alpha_c}
			+
			\frac{2\pi\lambda_{b}K}{\alpha_c-2}
			r_{o}^{\,2-\alpha_c}
			+
			\sigma_{k}^{2}
		}
		\Biggr)
		\nonumber\\
		&\times
		\frac{2(\pi\lambda_{u})^{k}}{(k-1)!}
		r_{o}^{\,2k-1}
		e^{-\pi\lambda_{u}r_{o}^{2}} .
		\label{eq:common_rate_lower_bound}
	\end{align}
	
\end{theorem}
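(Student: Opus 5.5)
We follow the same route as the private-rate lower bound in the preceding theorem. The plan is to condition on the serving distance $r_o=d_{k,o}$ and work with $\mathbb{E}[\log_2(1+\gamma_k^{c})\,|\,r_o]$. Write the conditional common SINR as $X/Y$, where
\[
X=P_{o,c}r_o^{-\alpha_c}G_c,\qquad Y=r_o^{-\alpha_c}\textstyle\sum_{i}P_{o,i}|\mathbf{h}_{k,p,o}\mathbf{w}_{o,i}|^2+\sum_{l}\eta_{k,p,l}+\sigma_k^2 .
\]
We use the standard bound
\[
\mathbb{E}[\log(1+X/Y)]\ge \log\!\big(1+e^{\mathbb{E}[\ln X]}/e^{\mathbb{E}[\ln Y]}\big),
\]
which follows from convexity of $t\mapsto\ln(1+e^{t})$ applied to $t=\ln X-\ln Y$. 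We then apply Jensen's inequality $\mathbb{E}[\ln Y]\le\ln\mathbb{E}[Y]$ to the denominator. This gives
\[
\mathbb{E}[\log_2(1+\gamma_k^c)\,|\,r_o]\ge\log_2\!\big(1+e^{\mathbb{E}[\ln X]}/\mathbb{E}[Y]\big).
\]

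For the numerator, we use the moment-matched model $G_c\sim\Gamma(m_c,\theta_c)$ from the common-rate upper-bound theorem, with $m_c,\theta_c$ as given there. For a Gamma variable, $\mathbb{E}[\ln G_c]=\psi(m_c)+\ln\theta_c$. Hence $e^{\mathbb{E}[\ln X]}=P_{o,c}r_o^{-\alpha_c}\theta_c e^{\psi(m_c)}$, which is exactly the numerator in \eqref{eq:common_rate_lower_bound}.

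For the denominator, we compute $\mathbb{E}[Y]$ term by term.
\begin{itemize}
\item \emph{Intra-cell private streams.} Each $|\mathbf{h}_{k,p,o}\mathbf{w}_{o,i}|^2$ has unit mean, since $\mathbf{w}_{o,i}$ is unit-norm and independent of the channel at the selected port beyond the reference-port correlation, which we neglect. With equal private powers $P_{o,i}=P_{o,k}$, this gives $KP_{o,k}r_o^{-\alpha_c}$.
\item \emph{Inter-cell interference.} $\mathbb{E}\|\mathbf{h}_{k,p,l}\mathbf{W}_l\|^2$ is bounded by $K$ per BS, the same effective loading used in the Laplace-transform exponent $(1+d^{-\alpha_c}z)^{-K}$ of the upper bounds. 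Campbell's theorem over $\Phi_b$ outside radius $r_l=r_o$ then gives $2\pi\lambda_b K\int_{r_o}^\infty x^{1-\alpha_c}\,dx=\frac{2\pi\lambda_bK}{\alpha_c-2}r_o^{2-\alpha_c}$.
\item \emph{Noise.} The noise contributes $\sigma_k^2$.
\end{itemize}

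Finally, we decondition over $r_o$. Since user $k$ is the $k$-th nearest point of $\Phi_u$, its distance has density $f(r_o)=\frac{2(\pi\lambda_u)^k}{(k-1)!}r_o^{2k-1}e^{-\pi\lambda_u r_o^2}$. The resulting integral over $[0,\infty)$ is evaluated by Gauss--Laguerre quadrature with abscissae $r_n$ and weights $H_n$, absorbing the exponential weight. This yields \eqref{eq:common_rate_lower_bound} with $r_o$ evaluated at the quadrature nodes.

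The main obstacle is rigor in the numerator. FAS port selection is based on the private gain, so the common gain at the selected port is not exactly Gamma-distributed. The bound is therefore rigorous only relative to the moment-matched Gamma model for $G_c$. We must also justify that interference and signal are independent given $r_o$, or at least that $\mathbb{E}[\ln X]$ and $\mathbb{E}[Y]$ can be computed separately. We would first state this as the approximation inherited from the common-rate upper-bound theorem rather than attempt an exact derivation.
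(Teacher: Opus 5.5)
Your proposal is correct and follows the paper's Appendix~D essentially step for step. The shared steps are conditioning on $r_o$, the bound $\ln\bigl(1+e^{\mathbb{E}[\ln X]}/\mathbb{E}[Y]\bigr)$, the Gamma moment-matched numerator $P_{o,c}r_o^{-\alpha_c}\theta_c e^{\psi(m_c)}$, the mean denominator built from $K$ unit-mean intra-cell terms plus Campbell's theorem outside radius $r_o$, and deconditioning with the $k$-th nearest-user density and Gauss--Laguerre quadrature. Your explicit convexity argument for $t\mapsto\ln(1+e^{t})$ is what the paper compresses into ``Jensen's inequality.'' It also shows that your independence worry is unnecessary, since $\mathbb{E}[\ln X-\ln Y]=\mathbb{E}[\ln X]-\mathbb{E}[\ln Y]$ holds by linearity regardless of dependence.

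The genuinely heuristic steps are the ones you share with the paper: the Gamma model for $G_c$, and the unit-mean treatment of the intra-cell terms. The $i=k$ term is in fact the selected port's own ZF gain $G_{k,\max}^{p}$, not a unit-mean variable. So the result is an approximate rather than a guaranteed lower bound.
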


\begin{IEEEproof}
The proof is provided in Appendix D.
\end{IEEEproof}

\section{Sensing Performance}
\label{sec:sensing_performance}

Since the RSMA common stream is employed as the sensing waveform, we
characterize the sensing performance through the average sensing SINR
defined from \eqref{eq:sensing_sinr}.

\begin{theorem}
	The average sensing SINR for target $m$ in the considered
	RSMA--ISAC network is approximated by
	\begin{equation}
		\begin{aligned}
			\bar{\gamma}_{s}
			={}&
			\frac{
				(\alpha_{s}-2)\zeta_{m,o}P_{o,c}
				\left|
				\mathbf{w}_{r}^{H}
				\mathbf{a}_{r}(\theta_{m})
				\right|^{2}
			}{
				2\lambda_{b} {P_l} K
			}
			\lambda_{r} (\pi \lambda_{r})^\frac {\alpha_s} 2
			e^{\pi\lambda_{r}h_t^{2}}
			\\
			&\times
			\Gamma
			\left(\frac {-\alpha_s} 2,			\pi\lambda_{r}h_{t}^{2}
			\right)
			\frac{
				A_{\alpha}
				+
				\rho^{2}N_{s}
			}{
				A_{\alpha}
				+
				\rho^{2}
			}.
		\end{aligned}
		\label{eq:average_sensing_sinr}
	\end{equation}
\end{theorem}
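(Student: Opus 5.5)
We would prove the approximation by splitting the expectation of \eqref{eq:sensing_sinr} into separate factors for the signal and the interference, following the usual stochastic-geometry heuristic $\mathbb{E}[X/Y]\approx \mathbb{E}[X]/\mathbb{E}[Y]$. We would then further factor $\mathbb{E}[X]$ into a target-location term and a beamformer term, treating $\beta_{m,o}$, the precoder $\mathbf{w}_{o,c}$ and $\zeta_{m,o}$ as mutually independent. The receive gain $|\mathbf{w}_r^H\mathbf{a}_r(\theta_m)|^2$ is deterministic once $\theta_m$ is fixed, so it factors out unchanged. This leaves three pieces: the path-loss average $\mathbb{E}[\beta_{m,o}]$, the transmit beam gain $\mathbb{E}[|\mathbf{a}_t^H(\theta_m)\mathbf{w}_{o,c}|^2]$, and the interference average.

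\textbf{Target-distance term.} We take $\beta_{m,o}=(r^2+h_t^2)^{-\alpha_s/2}$, where $r$ is the horizontal distance to the sensed target. We then average over the PPP $\Phi_r$ using either the nearest-target density $2\pi\lambda_r r e^{-\pi\lambda_r r^2}$ or Campbell's theorem. The substitution $u=\pi\lambda_r(r^2+h_t^2)$ gives
\begin{equation*}
\int_{\pi\lambda_r h_t^2}^{\infty} (\pi\lambda_r)^{\alpha_s/2} u^{-\alpha_s/2}e^{-u+\pi\lambda_r h_t^2}\,\mathrm{d}u=(\pi\lambda_r)^{\alpha_s/2}e^{\pi\lambda_r h_t^2}\Gamma\!\left(1-\tfrac{\alpha_s}{2},\pi\lambda_r h_t^2\right).
\end{equation*}
This reproduces the $e^{\pi\lambda_r h_t^2}(\pi\lambda_r)^{\alpha_s/2}\Gamma(\cdot,\pi\lambda_r h_t^2)$ structure in \eqref{eq:average_sensing_sinr}. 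The extra factor $\lambda_r$ in front would come from the density normalization in the way the target intensity is counted. We would fix the exact Gamma-function argument by following the paper's convention for this normalization.

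\textbf{Beam-gain term.} Write $\mathbf{g}=\sum_{i=1}^K\mathbf{h}_{i,1,o}^H$, which is distributed as $\mathcal{CN}(\mathbf{0},K\sigma^2\mathbf{I}_{N_s})$ and is independent of $\mathbf{a}_t$. We have $\|\mathbf{a}_t\|^2=N_s$. The numerator is $|\alpha\mathbf{a}_t^H\mathbf{g}+\rho N_s|^2$, with mean $\alpha^2 N_s K\sigma^2+\rho^2N_s^2$. The squared norm of the unnormalized beamformer has mean $\alpha^2N_sK\sigma^2+\rho^2N_s$. Taking the ratio of means, consistent with the moment-matching of $G_c$ in Theorem 3, and dividing through by $N_s$ gives
\begin{equation*}
\frac{A_\alpha+\rho^2N_s}{A_\alpha+\rho^2},\qquad A_\alpha=\alpha^2K\sigma^2 .
\end{equation*}

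\textbf{Interference term.} Each interfering BS $l$ transmits $K+1$ streams, and after the normalizations each contributes effective power on the order of $P_lK$ through an isotropic $\mathbf{H}_{l,o}$ with $\|\mathbf{w}_r\|=1$. Campbell's theorem with an exclusion radius then gives
\begin{equation*}
\mathbb{E}[I]=2\pi\lambda_bP_lK\int_{r_0}^{\infty} r^{1-\alpha_s}\,\mathrm{d}r=\frac{2\pi\lambda_bP_lK\,r_0^{2-\alpha_s}}{\alpha_s-2}.
\end{equation*}
The noise term is dropped in the interference-limited regime. This produces the factor $(\alpha_s-2)/(2\lambda_bP_lK)$, with the remaining $\pi$ and $r_0$ constants absorbed by normalization.

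\textbf{Main obstacle.} The hardest part is justifying the ratio-of-expectations steps, both for $\mathbb{E}[X/Y]$ and for the normalized beamformer gain, and reconciling the constants (the exclusion radius and the Gamma argument) with the stated form. We would present these as first-order approximations, consistent with the theorem being stated as an approximation, and check their accuracy against simulation.
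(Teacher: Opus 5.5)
Your beam-gain computation and the overall ratio-of-means/Campbell structure agree with the paper. The distance-dependent part of your argument, however, does not produce the stated formula, and the repairs you propose are not valid steps. Carried out as you describe, the decoupled quotient $\mathbb{E}[\beta_{m,o}]/\mathbb{E}[I]$ gives $\frac{\alpha_s-2}{2\pi\lambda_b P_l K\, r_0^{2-\alpha_s}}(\pi\lambda_r)^{\alpha_s/2}e^{\pi\lambda_r h_t^2}\Gamma(1-\tfrac{\alpha_s}{2},\pi\lambda_r h_t^2)$, times the beam and receive gains. This has the wrong Gamma argument, no $\lambda_r$ prefactor, and a leftover free constant $r_0$. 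No ``density normalization'' creates a factor $\lambda_r$, and $\pi r_0^{2-\alpha_s}$ cannot simply be absorbed.

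The missing idea is that the exclusion radius is not a constant but the target distance itself. The paper conditions on $d_t$ and takes the interfering BSs to satisfy $d_{l,o}\ge d_t$. Campbell's theorem then gives $\mathbb{E}[Y_s\mid d_t]=\frac{2\pi\lambda_bP_lK}{\alpha_s-2}d_t^{2-\alpha_s}$. The ratio of means is applied conditionally on $d_t$, and only afterwards averaged over $d_t$.

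In addition, the signal must carry the round-trip loss $\beta_{m,o}=d_t^{-2\alpha_s}$, with $d_t\ge h_t$ the 3D distance. This is implicit in the paper's step to $\mathbb{E}_{d_t}[d_t^{-(2+\alpha_s)}]$; your one-way $(r^2+h_t^2)^{-\alpha_s/2}$ does not work. The conditional ratio is then proportional to $d_t^{-2\alpha_s}d_t^{\alpha_s-2}=d_t^{-(2+\alpha_s)}$. With $f_{d_t}(r)=2\pi\lambda_r r e^{-\pi\lambda_r(r^2-h_t^2)}$ one gets $\mathbb{E}[d_t^{-(2+\alpha_s)}]=(\pi\lambda_r)^{1+\alpha_s/2}e^{\pi\lambda_r h_t^2}\Gamma(-\tfrac{\alpha_s}{2},\pi\lambda_r h_t^2)$.

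The extra factor $\pi\lambda_r$ coming from the $d_t^{-2}$ part is exactly what produces the $\lambda_r$ prefactor, once its $\pi$ cancels the $\pi$ in $2\pi\lambda_b$. It is also why the Gamma argument is $-\alpha_s/2$ rather than $1-\alpha_s/2$. Both modifications are needed:
\begin{itemize}
\item coupling alone, with a one-way loss, would give $\mathbb{E}[d_t^{-2}]\propto\Gamma(0,\pi\lambda_r h_t^2)$;
\item round-trip loss alone, without coupling, would give $\Gamma(1-\alpha_s,\pi\lambda_r h_t^2)$.
\end{itemize}
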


where  $A_{\alpha}=K(1-\rho^{2})$ is the communication component of the common beamformer.

\begin{IEEEproof}
	The proof is provided in Appendix E.
\end{IEEEproof}

\section{Joint RSMA\textendash ISAC Resource Allocation}
\label{sec:resource_allocation}

The analytical expressions derived in the previous sections characterize
the communication and sensing performance of the considered
RSMA--ISAC network in terms of the system parameters.
Based on these expressions, we formulate a joint resource-allocation
problem to balance the communication and sensing performance.
Specifically, we optimize the communication\textendash sensing beam tradeoff by beamforming coefficient $\rho$, the RSMA power allocation, and the number of scheduled users $K$ to maximize the average RSMA sum rate while satisfying the sensing quality-of-service (QoS) requirement. The total transmit power of the typical BS is divided between the common and private streams according to 
	$P_{o,c}
	=
	\beta P_o,
	P_{o,k}
	=
	\frac{(1-\beta)P_o}{K},$ where
	$0\leq\beta\leq1.$
Accordingly, the average RSMA sum rate is expressed as
\begin{equation}
	\bar{R}_{\mathrm{sum}}^{\mathrm{RSMA}}
	(\rho,\beta,K)
	=
	\bar{R}_{c}(\rho,\beta,K)
	+
	\sum_{k=1}^{K}
	\bar{R}_{k}^{p}(\beta,K),
	\label{eq:average_rsma_sum_rate}
\end{equation}

The joint resource-allocation problem is formulated as
\begin{equation}
	\begin{aligned}
		(\mathcal{P}_{1}):\qquad
		\underset{\rho,\beta,K}{\operatorname{max}}
		\quad&
		\bar{R}_{\mathrm{sum}}^{\mathrm{RSMA}}
		(\rho,\beta,K)
		\\
		\textrm{s.t.}
		\quad&
		\bar{\gamma}_{s}(\rho,\beta,K)
		\geq
		\Gamma_s,
		\\
		&
		0\leq\rho\leq1,
		\\
		&
		0\leq\beta\leq1,
		\\
		&
		K_{\min}\leq K\leq K_{\max},
		\qquad
		K\in\mathbb{Z}^{+},
	\end{aligned}
	\label{eq:joint_resource_allocation}
\end{equation}
where $\Gamma_s$ denotes the minimum required average sensing SINR, $K_{\min}\geq1$, and $K_{\max}$ is the maximum number
of scheduled users. Since ZF precoding is employed
for the private streams, the scheduling size must satisfy
$K_{\max}\leq N_s,$ Problem $(\mathcal{P}_{1})$ is a mixed integer non-convex optimization problem because of the discrete scheduling variable $K$, the nonlinear coupling among $\rho$, $\beta$, and $K$, and the sensing QoS
constraint.

\subsection{Proposed Block Coordinate Descent Algorithm}

Although $(\mathcal{P}_{1})$ is non-convex, its optimization variables can be separated into two continuous variables, $\rho$ and $\beta$, and one discrete variable, $K$. Accordingly, a block coordinate descent (BCD) algorithm is adopted, in which one variable is optimized
while the remaining variables are kept fixed.
At the $t$-th iteration, the joint problem is decomposed into three subproblems.

\subsubsection{Optimization of the communication\textendash sensing beam tradeoff}
For fixed $\beta^{(t)}$ and $K^{(t)}$, the communication--sensing
tradeoff parameter is updated as
\begin{equation}
	\begin{aligned}
		\rho^{(t+1)}
		=
		\underset{0\leq\rho\leq1}{\operatorname{arg\,max}}
		\quad&
		\bar{R}_{\mathrm{sum}}^{\mathrm{RSMA}}
		\left(
		\rho,\beta^{(t)},K^{(t)}
		\right)
		\\
		\textrm{s.t.}
		\quad&
		\bar{\gamma}_{s}
		\left(
		\rho,\beta^{(t)},K^{(t)}
		\right)
		\geq
		\Gamma_s.
	\end{aligned}
	\label{eq:rho_subproblem}
\end{equation}

Since \eqref{eq:rho_subproblem} involves only one scalar variable that is optimized over
the bounded interval $[0,1]$, it can be efficiently solved using a
one-dimensional search such as the Golden Section Search (GSS) which is stated in Algorithm 1.

\subsubsection{Optimization of RSMA power allocation}
For fixed $\rho^{(t+1)}$ and $K^{(t)}$, the power-allocation
coefficient is updated according to
\begin{equation}
	\begin{aligned}
		\beta^{(t+1)}
		=
		\underset{0\leq\beta\leq1}{\operatorname{arg\,max}}
		\quad&
		\bar{R}_{\mathrm{sum}}^{\mathrm{RSMA}}
		\left(
		\rho^{(t+1)},\beta,K^{(t)}
		\right)
		\\
		\textrm{s.t.}
		\quad&
		\bar{\gamma}_{s}
		\left(
		\rho^{(t+1)},\beta,K^{(t)}
		\right)
		\geq
		\Gamma_s.
	\end{aligned}
	\label{eq:beta_subproblem}
\end{equation}

Again, only one optimization variable, $\beta$, is involved and can be efficiently determined through a one-dimensional GSS over
$[0,1]$.

\subsubsection{Optimization of the number of scheduled users}
For fixed $\rho^{(t+1)}$ and $\beta^{(t+1)}$, the number of scheduled
users is updated as
\begin{equation}
	K^{(t+1)}
	=
	\underset{
		K_{\min}\le K\le K_{\max}
	}{
		\operatorname{arg\,max}
	}
	\;
	\bar{R}_{\mathrm{sum}}^{\mathrm{RSMA}}
	\left(
	\rho^{(t+1)},
	\beta^{(t+1)},
	K
	\right),
	\label{eq:K_subproblem}
\end{equation}
Since $K$ is an integer variable, the solution is obtained by searching over the finite set $\{K_{\min},K_{\min}+1,\ldots,K_{\max}\}$, as summarized in
Algorithm~2.

The three optimization steps are repeated until the change in the average RSMA
sum rate satisfies
$\left|\bar{R}_{{\rm sum}}^{(t+1)}-\bar{R}_{{\rm sum}}^{(t)}\right|\le\varepsilon,$
where $\varepsilon$ is a prescribed convergence tolerance. The
overall iterative procedure is summarized in Algorithm~3

\begin{algorithm}[H]
	Input: Initial interval $[0,1]$, fixed ($\beta$), fixed ($K$),
	convergence tolerance ($\varepsilon$).
	
	Output: Optimal communication\textendash sensing tradeoff parameter
	($\rho^{\star}$).
	
	1. Initialize $a=0,\qquad b=1,$ and compute the golden ratio coefficient
	$\phi=\frac{\sqrt{5}-1}{2}\approx0.618.$ 
	
	2. Compute the two interior points
	
	$\rho_{1}=b-\phi(b-a),$ $\rho_{2}=a+\phi(b-a).$ 
	
	3. Evaluate the objective function
	
	$F(\rho)=\bar{R}_{{\rm sum}}^{{\rm RSMA}}(\rho,\beta,K),$
	
	while verifying the sensing constraint
	
	$\bar{\gamma}_{s}(\rho,\beta,K)\ge\Gamma_{s}.$ 
	
	If the constraint is violated, assign
	
	$F(\rho)=-\infty.$ 
	
	4. Compare $F(\rho_{1})$ and $F(\rho_{2})$.
	
	If $F(\rho_{1})>F(\rho_{2}),$ update $b=\rho_{2}.$ 
	
	Otherwise, $a=\rho_{1}.$ 
	
	5. Recompute
	
	$\rho_{1}=b-\phi(b-a),$ $\rho_{2}=a+\phi(b-a).$
	
	6. Repeat Steps 3\textendash 5 until $|b-a|<\varepsilon.$ 
	
	7. Return $\rho^{\star}=\frac{a+b}{2}.$
	
	\protect\caption{GSS for Optimizing the Communication\textendash Sensing Tradeoff Parameter
		$\rho$.}
\end{algorithm}
\begin{algorithm}[H]
Input: Fixed ($\rho$), fixed ($\beta$), search interval $K\in\left\{ K_{\min},K_{\min}+1,\ldots,K_{\max}\right\} $. 

Output: Optimal number of scheduled users ($K^{\star}$).

1. Initialize $R_{\max}=-\infty$. 

2. For each $K=K_{\min},K_{\min}+1,\ldots,K_{\max}$, 

perform the following steps:

a) Compute the analytical average common rate $\bar{R}_{c}(\rho,\beta,K).$

b) Compute the analytical average private rate $\sum_{k=1}^{K}\bar{R}_{k}^{p}(\beta,K).$ 

c) Compute the average radar SINR $\bar{\gamma}_{s}(\rho,\beta,K).$

d) If $\bar{\gamma}_{s}<\Gamma_{s},$ discard the current ($K$).

Otherwise, compute $R(K)=\bar{R}_{c}+\sum_{k=1}^{K}\bar{R}_{k}^{p}.$ 

e) If $R(K)>R_{\max}$, update $R_{\max}=R(K),$ 

and $K^{\star}=K.$ 

3. Return the optimal scheduling parameter $K^{\star}.$ 

\protect\caption{One-Dimensional Integer Search for Optimizing the Number of Scheduled
Users ($K$).}
\end{algorithm}
\begin{algorithm}[H]
Input: ($P_{o}$,$\Gamma_{s}$,$K_{\min}$,$K_{\max}$,$\varepsilon$).

1. Initialize $\rho^{(0)}$,$\beta^{(0)}$,$K^{(0)}$. 

2. Compute the analytical common and private rates. 

3. Repeat

$\:$$\:$Update ($\rho$) using Golden Section Search. 

$\:$$\:$Update ($\beta$) using Golden Section Search. 

$\:$$\:$Update ($K$) using one-dimensional integer search. 

4. Until convergence. 

5. Output ($\rho^{\star},\beta^{\star},K^{\star}$).

\protect\caption{Proposed Joint RSMA\textendash ISAC Resource Allocation.}
\end{algorithm}

\subsubsection{Convergence Analysis}
Assuming that each subproblem is solved to its optimal feasible value, the resulting objective sequence is
non-decreasing, i.e.,
	$\bar{R}^{(t)}
	\leq
	\bar{R}_{\rho}^{(t)}
	\leq
	\bar{R}_{\beta}^{(t)}
	\leq
	\bar{R}_{K}^{(t)}
	=
	\bar{R}^{(t+1)}.$

Furthermore, the achievable RSMA sum rate is upper bounded for finite
transmit power. Hence, the non-decreasing sequence
$\{\bar{R}^{(t)}\}$ converges to a finite value. Due to the
non-convex and mixed-integer nature of $(\mathcal{P}_{1})$, the
resulting solution is not necessarily globally optimal.

\subsubsection{Computational Complexity}

Let $L_{\rho}$ and $L_{\beta}$ denote the numbers of GSS iterations
required to optimize $\rho$ and $\beta$, respectively, and let $T$
denote the total number of BCD iterations. The computational complexity of one BCD iteration is
	$\mathcal{O}
	\left(
	L_{\rho}
	+
	L_{\beta}
	+
	(K_{\max}-K_{\min}+1)
	\right).$
Accordingly, the total complexity is
	$\mathcal{O}
	\left(
	T
	\left[
	L_{\rho}
	+
	L_{\beta}
	+
	(K_{\max}-K_{\min}+1)
	\right]
	\right).$
	
For comparison, an exhaustive search using $N_{\rho}$ and $N_{\beta}$
grid points for $\rho$ and $\beta$, respectively, requires
	$\mathcal{O}
	\left(
	N_{\rho}N_{\beta}
	\left(
	K_{\max}-K_{\min}+1
	\right)
	\right).$
Hence, the proposed BCD algorithm substantially reduces the computational complexity compared with a full three-dimensional exhaustive search while providing monotonic convergence. 

\subsection{Proposed Analytical Solution}

Although the BCD approach provides an iterative solution to
$(\mathcal{P}_{1})$, repeatedly evaluating the analytical rate under
sensing constraint may still incur non-negligible computational
cost. Motivated by the derived analytical performance
expressions derived in Sections ~\ref{sec:communication_performance} and \ref{sec:sensing_performance}, we next develop a low-complexity sequential resource allocation method for determining $\beta$, $\rho$, and $K$.

\subsubsection{Communication--Sensing Beam Tradeoff}

The common precoder in \eqref{eq:common_precoder} combines the communication component and the sensing steering vector through the beam tradeoff coefficient $\rho$.
Assuming that the communication objective decreases as additional
weight is assigned to sensing, the minimum feasible value of $\rho$
is obtained by activating the sensing constraint $\bar{\gamma}_{s}=\Gamma_s$.

Solving \eqref{eq:average_sensing_sinr} with respect to $\rho$ yields
the closed-form candidate
\begin{equation}
	\rho^{\star}
	=
	\sqrt{
		\frac{
			P_{o,c}K-zK
		}{
			P_{o,c}K
			+
			P_{o,c}N_s
			+
			zK-z
		}
	},
	\label{eq:rho_closed_form}
\end{equation}
where $z=\frac{\Gamma_{s}}{\varrho}$, and $\varrho=\frac{\left(\alpha_{s}-2\right)\zeta_{m,o}P_{o,c}\left|\mathbf{w}_{r}^{H}\mathbf{a}_{r}\left(\theta_{m}\right)\right|^{2}}{2\lambda_{b} {P_l} K}e^{\pi\lambda_{r}h_{t}^{2}}\\\lambda_{r} (\pi \lambda_{r} )^{\frac{\alpha_s} 2} \Gamma\left(\frac{-\alpha_s} 2,\pi\lambda_{r}h_{t}^{2}\right).$
Thus, the communication\textendash sensing beamforming coefficient
can be computed directly without any iterative optimization.

\subsubsection{RSMA Power Allocation}

The power-allocation coefficient $\beta$ determines the fraction of
the total transmit power assigned to the common stream. To preserve the private-stream
performance relative to the conventional NoRS transmission while
allocating the remaining power to the common stream. The private-stream performance is designed to approximately preserve
the NoRS benchmark:
\begin{equation}
	\mathbb{E}
	\left[
	R_k^p
	\right]
	\approx
	\mathbb{E}
	\left[
	R_k^{\mathrm{NoRS}}
	\right],
	\qquad
	k=1,\ldots,K.
	\label{eq:private_rate_preservation}
\end{equation}

Accordingly, the power-allocation coefficient can be determined from
\begin{equation}
	\beta^{\star}
	=
	\underset{0\leq\beta\leq1}{\operatorname{arg\,min}}
	\left|
	\mathbb{E}
	\left[
	R_k^p
	\right]
	-
	\mathbb{E}
	\left[
	R_k^{\mathrm{NoRS}}
	\right]
	\right|.
	\label{eq:beta_analytical_solution}
\end{equation}


\subsubsection{Number of Users}

In RSMA, the achievable common rate is determined by the
scheduled user with the lowest common rate as in (\ref{eq:common_rate}). Therefore, maximizing the number of simultaneously scheduled users without considering the common rate bottleneck may significantly degrade the overall system performance. To maintain reliable common-stream decoding, we impose
\begin{equation}
	\min_{1\leq k\leq K}
	\bar{R}_{k}^{c}
	\geq
	\varepsilon_c,
	\label{eq:common_rate_qos}
\end{equation}
where $\varepsilon_c$ denotes the minimum required average common rate. Accordingly, the optimal number of scheduled users is determined as
\begin{equation}
	K^{\star}
	=
	\max
	\left\{
	K:
	\min_{1\leq k\leq K}
	\bar{R}_{k}^{c}
	\geq
	\varepsilon_c,
	\;
	K\leq N_s
	\right\}.
	\label{eq:optimal_user_number}
\end{equation}

The proposed scheduling algorithm here starts from the maximum feasible
multiplexing order $K=N_{s}$. For each candidate $K$, the corresponding resource-allocation parameters are determined and the common-rate is evaluated. If the common rate constraint is satisfied, the current scheduling size is accepted. Otherwise, the number of scheduled users is reduced by one $K=N_s-1$ and the procedure is repeated until the feasibility condition is met.

Since the search is performed only over the finite set $\left\{ 1,2,\ldots,N_{s}\right\} ,$
the complexity of the scheduling algorithm is linear with respect
to the number of transmit antennas, i.e., $\mathcal{O}(N_{s})$.

Consequently, the proposed analytical resource-allocation framework
requires only a one-dimensional search for the RSMA power-allocation
coefficient $\beta$, a closed-form evaluation of the communication--sensing
tradeoff parameter $\rho$, and a finite search over the feasible scheduling
set $K$. This substantially reduces the computational burden relative to
conventional iterative optimization algorithms.

\section{Numerical Results}

This section validates the analytical communication and sensing results developed in Sections~\ref{sec:communication_performance}
and~\ref{sec:sensing_performance}, and evaluates the performance and computational efficiency of the resource-allocation framework developed in Section~\ref{sec:resource_allocation}. Monte Carlo
results are averaged over $10^{5}$ independent realizations of the network topology and small-scale fading. Unless otherwise specified, the parameters are summarized in Table~\ref{tab:simulation_parameters}.

\begin{table}[H]
	\centering
	\caption{Simulation parameters.}
	\label{tab:simulation_parameters}
	\begin{tabular}{|c|c|}
		\hline
		Parameter & Value \\
		\hline
		\hline 
		BS density
		& $\lambda_b=5\times10^{-6}\,\mathrm{BS/m^2}$ \\
		\hline
		User density
		& $\lambda_u=5\times10^{-5}\,\mathrm{users/m^2}$ \\
		\hline
		Target density
		& $\lambda_r=2\times10^{-6}\,\mathrm{targets/m^2}$ \\
		\hline
		Communication path-loss exponent
		& $\alpha_c=2.7$ \\
		\hline
		Sensing path-loss exponent
		& $\alpha_s=2.7$ \\
		\hline
		Number of transmit antennas
		& $N_s=8$ \\
		\hline
		Number of receive antennas
		& $N_r=8$ \\
		\hline
		Number of FAS ports
		& $P=16$ \\
		\hline
		Number of scheduled users
		& $K=4$ \\
		\hline
		Noise variance
		& $\sigma_k^2=\sigma_b^2=1$ \\
		\hline
		Sensing SINR threshold
		& $\Gamma_s=-5~\mathrm{dB}$ \\
		\hline
		Target altitude
		& $h_r=20~\mathrm{m}$ \\
		\hline
		Radar cross section
		& $\zeta_{m,o}=1$ \\
		\hline
		Power allocation coefficient
		& $\beta=0.1$ \\
		\hline
		Beamforming coefficient 
		& $\rho=0.5$\tabularnewline
		\hline 
	\end{tabular}
\end{table}

\begin{figure}[t]
	\centering
	\includegraphics[scale=0.6]{fig1}
	\caption{Average private rate versus transmit SNR.}
	\label{fig:private_rate_snr}
\end{figure}

Fig.~\ref{fig:private_rate_snr} illustrates the average private rate versus the transmit SNR for different numbers of transmit antennas and compares the simulated private
rate with the analytical upper and lower bounds. The private rate increases with the transmit SNR for both antenna configurations. Increasing the transmit-array size from $N_s=8$ to $N_s=16$ provides a substantial improvement
over the entire SNR range because the additional spatial degrees of freedom increase the effective ZF beamforming gain and improve multiuser interference suppression. In particular, the ZF gain depends on the dimension $N_s-K+1$, which increases with $N_s$ for a fixed number of scheduled users. The Monte Carlo results are consistently
bounded by the derived analytical upper and lower bounds, validating the proposed communication-performance analysis.

\begin{figure}[t]
	\centering
	\includegraphics[scale=0.6]{fig2}
	\caption{Average common rate versus transmit SNR for different common-stream power allocations with $P=8$.}
	\label{fig:common_rate_snr}
\end{figure}

Fig.~\ref{fig:common_rate_snr} shows the average common rate versus the transmit SNR for two common-stream power allocations with $P=8$. The analytical upper and lower bounds consistently enclose the Monte Carlo results and reproduce the effect of both transmit SNR and common-stream power allocation. The common rate increases with the transmit SNR, while its magnitude remains relatively smaller than the aggregate private rate because the common stream limited by the weakest common-stream decoder. Increasing the power allocation coefficient from $\beta=0.1$ to $\beta=0.2$, improves the common rate over the entire SNR range, demonstrating the impact of RSMA power allocation in RSMA transmission.

\begin{figure}[t]
	\centering
	\includegraphics[scale=0.6]{fig3}
	\caption{Average RSMA sum rate versus transmit SNR.}
	\label{fig:sum_rate_snr}
\end{figure}

Fig.~\ref{fig:sum_rate_snr} presents the average RSMA sum rate versus the transmit SNR. The Monte Carlo results remain between the analytical bounds, while the upper bound tracks the simulated performance more closely than the lower bound. Increasing $N_s$ from $8$ to $16$ produces a clear rate improvement, particularly at high SNR, due to the additional spatial degrees of freedom and enhanced beamforming gain. This behavior is consistent with the private rate results in Fig.~\ref{fig:private_rate_snr}, since the private-stream contribution constitutes the dominant component of the total RSMA rate for the considered parameters.
\begin{figure}[t]
	\centering
	\subfloat[Sum of the average private-stream rates versus $K$.]{
		\includegraphics[scale=0.5]{fig4c}
	}
	\\
	\subfloat[Average common-stream rate versus $K$.]{
		\includegraphics[scale=0.5]{fig4b}
	}
	\\
	\subfloat[Average RSMA sum rate versus $K$.]{
		\includegraphics[scale=0.5]{fig4a}
	}
	\caption{Average RSMA rates versus the number of scheduled users $K$ at $\mathrm{SNR}=10$ dB.}
	\label{fig:rates_vs_K}
\end{figure}
Fig.~\ref{fig:rates_vs_K} investigates the effect of the scheduling size at $\mathrm{SNR}=10$ dB. The close agreement between the Monte Carlo results and the corresponding analytical bounds further supports the validity of the proposed user-scheduling analysis.
As shown in Fig.~\ref{fig:rates_vs_K}(a), the sum of the private rates decreases as $K$ increases. Although a larger $K$ increases the multiplexing order, it simultaneously reduces the ZF beamforming dimension $N_s-K+1$, and under equal private-stream power allocation, reduces the power assigned to each private stream. Moreover, because the private ZF beamformers are designed using the reference-port CSI, residual intra-cell interference may remain at the selected FAS port. Consequently, the loss in per-user rate dominates the multiplexing benefit for the considered network parameters.
Fig.~\ref{fig:rates_vs_K}(b) shows that the common rate decreases even more rapidly with $K$. This behavior is mainly caused by the worst-user decoding requirement of the RSMA common stream. Since the common stream must be decoded by every scheduled user, its achievable rate is limited by the weakest common-stream decoder. Increasing $K$ therefore increases the probability of including a user with an unfavorable serving distance, channel realization, or interference condition, making the common stream increasingly bottlenecked.
Consequently, the RSMA sum rate in Fig.~\ref{fig:rates_vs_K}(c) also decreases with $K$ for this
configuration. These results show that increasing the number of scheduled users does not necessarily maximize the RSMA  throughput, since the potential multiplexing gain can be outweighed by the reduced ZF beamforming gain, lower per-user power, and the common-stream bottleneck. Therefore, $K$ should be carefully optimized, as considered in Section~\ref{sec:resource_allocation}.

\begin{figure}[t]
	\centering
	\includegraphics[scale=0.5]{fig5}
	\caption{Average sensing SINR versus transmit SNR for different common-stream power allocations.}
	\label{fig:sensing_sinr_snr}
\end{figure}
Fig.~\ref{fig:sensing_sinr_snr} compares the analytical and simulated average sensing SINR for different common-stream power allocations. The analytical approximation closely matches the simulated results over the entire SNR range, confirming the accuracy of the adopted statistical characterization of the common beamforming gain. The average sensing SINR increases with the transmit SNR and improves significantly when a larger fraction of the transmit power is allocated to the common stream, which also serves as the sensing waveform. In particular, increasing the common-stream power from $P_{o,c}=0.1P_o$ to $P_{o,c}=0.4P_o$ yields an approximately $6$-dB gain, consistent with the fourfold increase in sensing power, i.e., $10\log_{10}(0.4/0.1)\simeq6.02$ dB.

\begin{figure}[t]
	\centering
	\includegraphics[scale=0.5]{fig6}
	\caption{Average RSMA rates versus $\rho$.}
	\label{fig:rho_tradeoff}
\end{figure}
Fig.~\ref{fig:rho_tradeoff} illustrates the average sensing SINR versus the communication-sensing beam tradeoff $\rho$, showing that the average sensing SINR increases substantially with $\rho$. As $\rho$ increases, the target steering-vector component receives a larger weight in the common beamformer, while the communication component $\alpha=\sqrt{1-\rho^2}$ is reduced. The analytical approximation closely follows the Monte Carlo result over the complete range of $\rho$. This observation demonstrates the role of $\rho$ in controlling the communication--sensing operating point and motivates its joint optimization under the sensing QoS constraint.

\begin{figure*}[t]
	\centering
	\subfloat[Optimal RSMA sum rate.]{
		\includegraphics[scale=0.5]{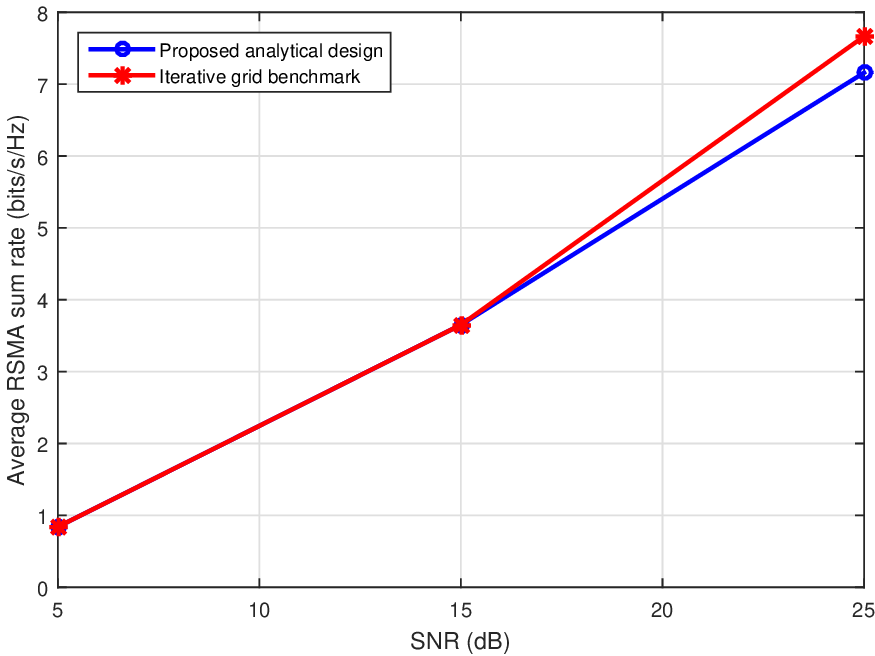}
	}
	\subfloat[Running time vs SNR.]{
		\includegraphics[scale=0.5]{fig7b}
	}
	\\
	\subfloat[Proposed analytical design optimized parameters vs SNR.]{
		\includegraphics[scale=0.5]{fig7c}
	}
	\subfloat[Iterative grid design optimized parameters vs SNR.]{
		\includegraphics[scale=0.5]{fig7d}
	}
	\caption{Performance and optimized parameters versus transmit SNR for $N_s=8$ and $P=8$.}
	\label{fig:optimization_N8_P8}
\end{figure*}
Fig.~\ref{fig:optimization_N8_P8} compares the proposed analytical resource-allocation method with the numerical grid-search benchmark for $N_s=8$ and $P=8$. As shown in
Fig.~\ref{fig:optimization_N8_P8}(a), both methods achieve nearly the same RSMA sum rate over the entire SNR range, with only a small performance loss at high SNR. Thus, the proposed low-complexity analytical method provides a near-optimal solution without performing a multidimensional numerical search. Fig.~\ref{fig:optimization_N8_P8}(b) compares their measured running times. The analytical design requires substantially less than one second over the considered SNR range, whereas the grid-search benchmark requires tens to hundreds of seconds. Hence, the analytical design provides a reduction of several orders of magnitude in measured runtime which demonstrates the effectiveness of the proposed analytical framework for practical implementations while maintaining nearly identical communication performance. 
The optimized parameters are shown in
Figs.~\ref{fig:optimization_N8_P8}(c) and
\ref{fig:optimization_N8_P8}(d). Both methods select $K^\star=2$ throughout the considered SNR range, indicating that serving two users provides the best balance between spatial multiplexing gain and inter-user interference. The common-power coefficient $\beta^\star$ generally increases with SNR, whereas $\rho^\star$ first increases and subsequently decreases at high SNR, indicating
that less steering toward the sensing direction is required once sufficient transmit power becomes available. Most importantly, the analytical design reproduces the same overall parameter trends as the grid-search benchmark. The optimized values of the iterative grid design exhibit nearly identical trends to those obtained by the proposed analytical design, confirming that the proposed optimization accurately predicts the near-optimal resource allocation while requiring substantially lower running time.

\begin{figure*}[t]
	\centering
	\subfloat[Optimal RSMA sum rate.]{
		\includegraphics[scale=0.5]{fig8a}
	}
	\subfloat[Running time vs SNR.]{
		\includegraphics[scale=0.5]{fig8b}
	}
	\\
	\subfloat[Proposed analytical design optimized parameters vs SNR.]{
		\includegraphics[scale=0.5]{fig8c}
	}
	\subfloat[Iterative grid design optimized parameters vs SNR.]{
		\includegraphics[scale=0.5]{fig8d}
	}
	\caption{Performance and optimized parameters versus transmit SNR for
		$N_s=8$ and $P=16$.}
	\label{fig:optimization_N8_P16}
\end{figure*}
Fig.~\ref{fig:optimization_N8_P16} repeats the comparison for
$N_s=8$ and $P=16$. Increasing the number of FAS ports improves the achievable RSMA sum rate because of the additional port-selection diversity, enabling each user to select a channel realization with a larger desired gain while preserving the same number of RF chains. The proposed analytical design continues to closely track the grid-search benchmark, while retaining a substantially lower runtime. The optimized scheduling size remains $K^\star=2$, indicating that increasing $P$ primarily enhances the quality of the selected user channel rather than the available transmit spatial degrees of freedom. The optimized $\beta^\star$ varies moderately with SNR, whereas $\rho^\star$ increases toward the medium-SNR regime and then decreases at high SNR. The two optimization methods yield very similar parameter trends. Increasing $P$ does not by itself enlarge the dimensionality of the $(\rho,\beta,K)$ search space; rather, it may increase the numerical cost of evaluating each candidate configuration because the FAS selection and its analytical characterization involve a larger number of candidate ports.

\begin{figure*}[t]
	\centering
	\subfloat[Optimal RSMA sum rate.]{
		\includegraphics[scale=0.5]{fig9a}
	}
	\subfloat[Running time vs SNR.]{
		\includegraphics[scale=0.5]{fig9b}
	}
	\\
	\subfloat[Proposed analytical design optimized parameters vs SNR.]{
		\includegraphics[scale=0.5]{fig9c}
	}
	\subfloat[Iterative grid design optimized parameters vs SNR.]{
		\includegraphics[scale=0.5]{fig9d}
	}
	\caption{Performance and optimized parameters versus transmit SNR for
		$N_s=32$ and $P=8$.}
	\label{fig:optimization_N32_P8}
\end{figure*}

Fig.~\ref{fig:optimization_N32_P8} investigates the effect of
increasing the number of transmit antennas to $N_s=32$. The achievable RSMA sum rate increases substantially relative to the $N_s=8$ case because the larger array provides higher beamforming gain, improved interference suppression, and additional spatial degrees of freedom for multiuser transmission. The proposed analytical solution remains close to the grid-search benchmark over the considered SNR range, demonstrating that its accuracy is maintained even for large antenna arrays while maintaining a substantially lower runtime. A particularly important difference appears in the optimized scheduling size. In contrast to Figs.~\ref{fig:optimization_N8_P8} and
\ref{fig:optimization_N8_P16}, where $K^\star=2$, the preferred number of scheduled users increases with SNR for $N_s=32$, reaching $K^\star=7$ at the highest plotted SNR. The larger antenna array therefore allows the system to exploit a higher multiplexing order without incurring the severe ZF beamforming loss observed for smaller arrays. The analytical and grid-search designs select nearly identical
values of $K^\star$, $\beta^\star$, and $\rho^\star$, demonstrating that the proposed low-complexity design captures the principal resource-allocation behavior even for massive MIMO configurations.

\section{Conclusion}

This paper investigated a network-level FAS-assisted RSMA-ISAC framework using stochastic geometry. Closed-form expressions were derived for the average  RSMA sum rate, and radar SINR, accounting for the spatial randomness of BSs, users, and sensing targets. A low-complexity resource allocation framework was then developed to jointly optimize the common power allocation, communication-sensing beamforming parameter, and number of scheduled users under radar sensing constraints. Numerical results showed excellent agreement between the analytical results and Monte-Carlo simulations, while the proposed optimization achieved performance close to exhaustive grid search with substantially lower complexity. The results further demonstrated that the proposed beamforming strategy also effectively balances communication and sensing performance across different operating conditions.

\section*{Appendix A}

Let $X=d_{k,o}^{-\alpha_c}P_{o,k}G_{k,max}^p$
and $Y=Y_{intra}+Y_{inter}+\sigma_{k}^{2}$
with $Y_{intra}=\stackrel[i\neq k]{K}{\sum}d_{k,o}^{-\alpha_c}P_{o,i}\left|\mathbf{h}_{k,p,o}\frac{\mathbf{\tilde{w}}_{o,i}^{\mathrm{zf}}}{\left\Vert \mathbf{\tilde{w}}_{o,i}^{\mathrm{zf}}\right\Vert }\right|^{2}$and
$Y_{inter}=\underset{l\in\Phi_{b}\setminus o}{\sum}d_{k,l}^{-\alpha_c}\left\Vert \mathbf{h}_{k,p,l}\mathbf{W}_{l}\right\Vert ^{2}$. Now the average of the private part can be calculated by {[}useful
lemma{]}
\begin{equation}
	\begin{aligned}
		\mathcal{E}\left\{ \log\left(1+\gamma_{k}^{p}\right)\right\} =\stackrel[0]{\infty}{\int}\frac{1}{z}\left(1-\mathcal{E}\left\{ e^{-zX}\right\} \right)\mathcal{E}\left\{ e^{-zY}\right\} dz.
	\end{aligned}
\end{equation}

The exact average rate is intractable; hence, we derive upper and lower bounds on the average private rate. The Laplace transform
of $X$, conditioned on $d_{k,o}=r_{o}$, is $
\mathcal{E}\left\{ e^{-zX}\right\} =\mathcal{E}\left\{ e^{-zr_{o}^{-\alpha_c}P_{o,k}G_{k,max}^p}\right\}$. Since $G_{k,max}^p=\left|\mathbf{h}_{k,p,o}\frac{\mathbf{\tilde{w}}_{o,k}^{\mathrm{zf}}}{\left\Vert \mathbf{\tilde{w}}_{o,k}^{\mathrm{zf}}\right\Vert }\right|^{2}\sim\Gamma\left(N_{s}-K+1,1\right)$ with
CDF $
F_{G_{p}}\left(x\right)=1-e^{-x}\stackrel[n=0]{m-1}{\sum}\frac{x^{n}}{n!}$, where $m=N_{s}-K+1$. Thus, the CDF of $G_{k,max}^p$ can be upper bounded
as $F_{G_{k,max}^p}\left(x\right)=\left[1-e^{-x}\stackrel[n=0]{m-1}{\sum}\frac{x^{n}}{n!}\right]^{P}$. Thus, the Laplace transform of $X$, conditioned on $d_{k,o}=r_{o}$, can be calculated by
\begin{equation}
	\begin{aligned}
		\mathcal{E}\!\left(e^{-zX}\mid r_o\right)
		&=z r_o^{-\alpha_c}P_{o,k}
		\int_{0}^{\infty}
		e^{-z r_o^{-\alpha_c}P_{o,k}x}  \\
		&\quad\times
		\left[
		1-e^{-x}\sum_{n=0}^{m-1}\frac{x^n}{n!}
		\right]^P dx .
	\end{aligned}
\end{equation}
By applying the binomial expansion,

\begin{equation}
	\begin{aligned}
		\left[1-e^{-x}\stackrel[n=0]{m-1}{\sum}\frac{x^{n}}{n!}\right]^{P}
		&=\stackrel[r=0]{P}{\sum}\left(-1\right)^{r}\left(\begin{array}{c}
			P\\
			r
		\end{array}\right)e^{-rx}\\
		&\quad\times\left(\stackrel[n=0]{m-1}{\sum}\frac{x^{n}}{n!}\right)^{r}
	\end{aligned}
\end{equation}
also $
\left(\stackrel[n=0]{m-1}{\sum}\frac{x^{n}}{n!}\right)^{r}=\stackrel[q=0]{r\left(m-1\right)}{\sum}c_{r,q}x^{q}$
where $c_{r,q}=\underset{\underset{0\leq k_{i}\leq m-1}{k_{1}+..+k_{r}=q}}{\sum}\frac{1}{k_{1}!k_{2}!..k_{r}!}$.
Thus,
\begin{equation}
	\begin{aligned}
		\mathcal{E}\!\left(e^{-zX}\mid r_{o}\right)
		&=\frac{z r_o^{-\alpha_c}P_{o,k}}{\Gamma(m)}
		\sum_{r=0}^{P}(-1)^r\binom{P}{r} \\
		&\quad\times
		\sum_{q=0}^{r(m-1)}
		\frac{c_{r,q}\Gamma(m+q)}
		{\bigl(r_o^{-\alpha_c}P_{o,k}z+r\bigr)^{m+q}} .
	\end{aligned}
\end{equation}

Finally, averaging over the $k$-th nearest-user distance with PDF
$
f_{d_{k,o}}\left(r_{o}\right)=\frac{2\left(\pi\lambda_{u}\right)^{k}}{\left(k-1\right)!}r_{o}^{2k-1}e^{-\pi\lambda_{u}r_{o}^{2}}
$, gives the Laplace transform of $X$ as
\begin{equation}
	\begin{aligned}
		\mathcal{L}_{X}(z)=\mathcal{E}\left\{ e^{-zX}\right\}
		=\frac{2(\pi\lambda_u)^k zP_{o,k}}
		{(k-1)!\Gamma(m)}
		\sum_{r=0}^{P}(-1)^r\binom{P}{r} \\
		\times
		\sum_{q=0}^{r(m-1)}c_{r,q}\Gamma(m+q) 
		\times
		\int_{0}^{\infty}
		\frac{r_o^{2k-1-\alpha_c}e^{-\pi\lambda_u r_o^2}}
		{\bigl(r_o^{-\alpha_c}P_{o,k}z+r\bigr)^{m+q}}
		\,dr_o .
	\end{aligned}
\end{equation}

Conditioned on $d_{k,o}=r_{o}$,
$\mathcal{L}_{Y}(z)=\mathcal{E}\!\left\{e^{-zY}\right\}=\mathcal{E}\!\left[
e^{-z(Y_{\mathrm{intra}}
	+Y_{\mathrm{inter}}
	+\sigma_k^2)}
\right]=\mathcal{L}_{Y_{intra}}\left(z\right)\mathcal{L}_{Y_{inter}}\left(z\right)e^{-z \sigma_{k}^{2}}$. Since $\stackrel[i\neq k]{K}{\sum}\left|\mathbf{h}_{k,p,o}\frac{\mathbf{\tilde{w}}_{o,i}^{\mathrm{zf}}}{\left\Vert \mathbf{\tilde{w}}_{o,i}^{\mathrm{zf}}\right\Vert }\right|^{2}\sim\Gamma\left(K-1,1\right)$, we can write 
\begin{equation}
	\mathcal{L}_{Y_{intra}}\left(z/r_{o}\right)=\left(1+r_{o}^{-\alpha_c}P_{o,i}z\right)^{1-K}.
\end{equation}

Assuming equal power for all users, $P_{o,i}=P_{o,k}$ and veraging over the $k$-th nearest-user distance gives
\begin{equation}
	\begin{aligned}
		\mathcal{L}_{Y_{\mathrm{intra}}}(z)
		&=\frac{2(\pi\lambda_u)^k}{(k-1)!}
		\int_{0}^{\infty}
		\bigl(1+r_o^{-\alpha_c}P_{o,k}z\bigr)^{1-K} \\
		&\quad\times
		r_o^{2k-1}e^{-\pi\lambda_u r_o^2}\,dr_o .
	\end{aligned}
\end{equation}

Similarly, since $
\left\Vert \mathbf{h}_{k,p,l}\mathbf{W}_{l}\right\Vert ^{2}\sim\Gamma\left(K,1\right)$
we can find, $
\mathcal{E}\left\{ e^{-z\left(d_{k,l}^{-\alpha_c}\left\Vert \mathbf{h}_{k,p,l}\mathbf{W}_{l}\right\Vert ^{2}\right)}\right\} =\left(1+d_{k,l}^{-\alpha_c}z\right)^{-K}
$. Thus,
\begin{equation}
	\begin{aligned}
		\mathcal{L}_{Y_{\mathrm{inter}}}(z)
		&=\mathcal{E}\!\left[
		\prod_{l\in\Phi_b\setminus\{o\}}
		\bigl(1+d_{k,l}^{-\alpha_c}z\bigr)^{-K}
		\right].
	\end{aligned}
\end{equation}
Using PGFL we can obtain,
\begin{equation}
	\begin{aligned}
		\mathcal{L}_{Y_{\mathrm{inter}}}(z)
		&=e^{-2\pi\lambda_{b}\stackrel[r_{l}]{\infty}{\int}\left(1-\left(1+d_{k,l}^{-\alpha_c}z\right)^{-K}\right)d_{k,l}d_{d_{k,l}}}
	\end{aligned}
\end{equation}

Finally, substituting $\mathcal{L}_{X}(z)$ and $\mathcal{L}_{Y}(z)$ into (1) and applying the Gauss?Laguerre quadrature yields the average rate in Theorem 1.

\section*{Appendix B}
By using Jensen's inequality, we can write
\begin{equation}
	R_{k}^{p}\geq\frac{1}{\ln2}\ln\left(1+\frac{e^{\mathcal{E}\ln\left(X\right)}}{\mathcal{E}\left(Y\right)}\right)
\end{equation}
First, 
\begin{equation}
	\begin{aligned}
		\mathcal{E}\left\{ \ln\left(X\mid r_o\right)\right\} =\ln P_{o,k}-\alpha_c\ln r_o+\mathcal{E}\ln\left(G_{k,max}^p\right)\\
		=\ln P_{o,k}-\alpha_c\ln r_o+\psi\left(N_{s}-K+1\right)
	\end{aligned}
\end{equation}
where $\psi\left(.\right)$ is the digamma function. Also,
\begin{equation}
	\mathcal{E}\left\{ \ln\left(Y\mid r_o\right)\right\} =P_{o,k}r_o^{-\alpha_c}\left(K-1\right)+\frac{2\pi\lambda_{b}K}{\alpha_c-2}r_o^{2-\alpha_c}+\sigma_{k}^{2}
\end{equation}
Therefore, a valid lower bound is
\begin{equation}
	\begin{aligned}
		R_{k,\mathrm{Lb}}^{p}
		&=\frac{1}{\ln 2}
		\int_{0}^{\infty}
		\ln\!\left(1+\frac{N_p(r_o)}{D_p(r_o)}\right)
		\times f_{d_{k,o}}(r_o)\,dr_o 
	\end{aligned}
\end{equation}
where 
$
N_p(r)
=P_{o,k}r_o^{-\alpha_c}
+e^{\psi(N_s-K+1)},\\
D_p(r)
=P_{o,k}r_o^{-\alpha_c}(K-1)
+\frac{2\pi\lambda_bK}{\alpha_c-2}r^{2-\alpha_c}
+\sigma_k^2, and\\    f_{d_{k,o}}\left(r_{o}\right)=\frac{2\left(\pi\lambda_{u}\right)^{k}}{\left(k-1\right)!}r_{o}^{2k-1}e^{-\pi\lambda_{u}r_{o}^{2}}.$

Finally, implementing the Lagrange polynomial, we can get the average rate as in Theorem 2.

\section*{Appendix C}

Let $\gamma_{k}^{c}=\frac{X_{c}}{Y_{c}}$, with $ X_{c}=d_{k,o}^{-\alpha_c}P_{o,c}\left|\mathbf{h}_{k,p,o}\mathbf{w}_{o,c}\right|^{2}$ and
$Y_c=Y_{c,\mathrm{intra}}+Y_{c,\mathrm{inter}}+\sigma_k^2$, where
$Y_{c,\mathrm{intra}}=\sum_{i=1}^{K}
d_{k,o}^{-\alpha_c}P_{o,i}
\left|
\mathbf{h}_{k,p,o}
\frac{\widetilde{\mathbf w}_{o,i}^{\mathrm{zf}}}
{\|\widetilde{\mathbf w}_{o,i}^{\mathrm{zf}}\|}
\right|^2,$ and 
$Y_{c,\mathrm{inter}}=\sum_{l\in\Phi_b\setminus\{o\}}
d_{k,l}^{-\alpha_c}\|\mathbf h_{k,p,l}\mathbf W_l\|^2$. Since $\mathbf{w}_{o,c}$ is constructed from the user channels and sensing steering vector, the gain $G_{c}=\left|\mathbf{h}_{k,p,o}\mathbf{w}_{o,c}\right|^{2}$, can be approximated using moment matching as $G_{c}\sim\Gamma\left(m_{c},\theta_{c}\right)$
where $m_{c}=\frac{\left(\mathcal{E}\left(G_{c}\right)\right)^{2}}{\mathrm{Var}\left(G_{c}\right)}$
and $\theta_{c}=\frac{\mathrm{Var}\left(G_{c}\right)}{\mathcal{E}\left(G_{c}\right)}$.
A simple usable approximations are $\mathcal{E}\left(G_{c}\right)\thickapprox1+\frac{N_{s}\left|\alpha\right|^{2}}{\stackrel[i=1]{K}{\sum}\left|\alpha\right|^{2}+\rho^{2}}, \mathrm{Var}\left(G_{c}\right)\thickapprox1+\frac{2N_{s}\left|\alpha\right|^{2}}{\stackrel[i=1]{K}{\sum}\left|\alpha\right|^{2}+\rho^{2}}$.

Conditioned on $d_{k,o}=r_{o}$, we can find 
\begin{equation}
	\mathcal{L}_{X_{c}}\left(z\mid r_{o}\right)=\left(1+\theta_{c}zr_{o}^{-\alpha_c}P_{o,c}\right)^{-m_{c}}.
\end{equation}

Averaging over the $k$-th nearest-user distance gives
\begin{equation}
	\begin{aligned}
		\mathcal{L}_{X_c}(z)
		&=\frac{2(\pi\lambda_u)^k}{(k-1)!}
		\int_{0}^{\infty}
		\bigl(1+\theta_c z r_o^{-\alpha_c}P_{o,c}\bigr)^{-m_c}\\
		&\quad\times
		r_o^{2k-1}e^{-\pi\lambda_u r_o^2}\,dr_o .
	\end{aligned}
\end{equation}

Since $\stackrel[i=1]{K}{\sum}\left|\mathbf{h}_{k,p,o}\frac{\mathbf{\tilde{w}}_{o,i}^{\mathrm{zf}}}{\left\Vert \mathbf{\tilde{w}}_{o,i}^{\mathrm{zf}}\right\Vert }\right|^{2}\sim\Gamma\left(K,1\right)$, we can get

\begin{equation*}
	\begin{aligned}
		\mathcal{L}_{Y_{c,\mathrm{intra}}}(z)
		&=\frac{2(\pi\lambda_u)^k}{(k-1)!}
		\int_{0}^{\infty}
		\bigl(1+r_o^{-\alpha_c}P_{o,k}z\bigr)^{-K}\\
		&\quad\times
		r_o^{2k-1}e^{-\pi\lambda_u r_o^2}\,dr_o 
	\end{aligned}
\end{equation*}
and $\mathcal{L}_{Y_{c,inter}}\left(z\right)=e^{-2\lambda_{b}\pi\stackrel[r_{l}]{\infty}{\int}\left(1-\left(1+d_{k,l}^{-\alpha_c}z\right)^{-K}\right)d_{k,l}d_{d_{k,l}}}$

The average common decoding rate at user $k$ can be obtained by substituting $\mathcal{L}_{X_{c}}\left(z\right)$ and $\mathcal{L}_{Y_{c}}\left(z\right)=\mathcal{L}_{Y_{c,intra}}\left(z\right)\mathcal{L}_{Y_{c,inter}}\left(z\right)e^{-z\sigma_{k}^{2}}$ in (1). Using Laguerre polynomials, we can get the average rate as in Theorem 3.

\section*{Appendix D}
For common decoding at user $k$, using Jensen's inequality, we can get

\begin{equation}
	R_{k}^{c}\geq\frac{1}{\ln2}\ln\left(1+\frac{e^{\mathcal{E}\ln\left(X_{c}\right)}}{\mathcal{E}\left(Y_{c}\right)}\right)
\end{equation}
where $X_{c}=d_{k,o}^{-\alpha_c}P_{o,c}G_{c}$ and
$Y_{c}=Y_{c,intra}+Y_{c,inter}+\sigma_{k}^{2}$. The average of the first term is 
\begin{equation}
	\mathcal{E}\left[\ln\left(X_{c}\right)\mid r_o\right]=\ln P_{o,c}-\alpha_c\ln r_{o}+\mathcal{E}\ln\left(G_{c}\right)
\end{equation}

For $G_{c}\simeq\Gamma\left(m_{c},\theta_{c}\right)$, $
\mathcal{E}\ln\left(G_{c}\right)=\psi\left(m_{c}\right)+\ln\theta_{c}$. Also,
\begin{equation}
	\mathcal{E}\left[\ln\left(Y_{c}\right)\mid r_o\right]=P_{o,k}r_{o}^{-\alpha_c}K+\frac{2\pi\lambda_{b}K}{\alpha_c-2}r_{o}^{2-\alpha_c}+\sigma_{k}^{2}
\end{equation}

Therefore, a valid lower bound is
\begin{equation}
	\begin{aligned}
		R_{k,\mathrm{Lb}}^{c}
		&=\frac{1}{\ln 2}
		\int_{0}^{\infty}
		\ln\!\left(1+\frac{N_c(r_o)}{D_c(r_o)}\right) \times f_{d_{k,o}}(r_o)\,dr_o .
	\end{aligned}
\end{equation}
where $
N_c(r_o)=P_{o,c} r_o^{-\alpha_c}\theta_c e^{\psi(m_c)}, \text{ and } 
D_c(r_o)=KP_{o,i} r_o^{-\alpha_c}
+\frac{2\pi\lambda_bK}{\alpha_c-2}r_o^{2-\alpha_c}
+\sigma_k^2 .$ 
Using Laguree polynomial we can get the average rate as in Theorem 4.

\section*{Appendix E}

Lets define, $\mathcal{E}\left[\gamma_{s}\right]=\mathcal{E}\left[\frac{X_{s}}{Y_{s}+\sigma_{b}^{2}}\right]$, where 
$X_{s}=P_{o,c}\beta_{m,o}\zeta_{m,o}A_{s}$ and $
A_{s}=\left|\mathbf{w}_{r}^{H}\mathbf{a}_{r}\left(\theta_{m}\right)\right|^{2}\left|\mathbf{a}_{t}^{H}\left(\theta_{m}\right)\mathbf{w}_{o,c}\right|^{2}$. Now, $ \mathcal{E}\left[X_{s}\right]=\zeta_{m,o}P_{o,c}\mathcal{E}\left[\beta_{m,o}\right]\mathcal{E}\left[A_{s}\right]$ where
$
\mathcal{E}\left[A_{s}\right]=\left|\mathbf{w}_{r}^{H}\mathbf{a}_{r}\left(\theta_{m}\right)\right|^{2}\mathcal{E}\left[\left|\mathbf{a}_{t}^{H}\left(\theta_{m}\right)\mathbf{w}_{o,c}\right|^{2}\right]
$. 
The common precoder is $\mathbf{w}_{o,c}=\frac{\stackrel[i=1]{K}{\sum}\alpha\mathbf{h}_{i,1,o}+\rho\mathbf{a}_{t}\left(\theta_{m}\right)}{\left\Vert \stackrel[i=1]{K}{\sum}\alpha\mathbf{h}_{i,1,o}+\rho\mathbf{a}_{t}\left(\theta_{m}\right)\right\Vert }$.
Let $\mathbf{u}=\stackrel[i=1]{K}{\sum}\alpha\mathbf{h}_{i,1,o}$. Since $
\mathbf{u}\sim \mathcal{CN}\left(0,A_{\alpha}\mathbf{I}\right)$, where $A_{\alpha}={K}\left|\alpha\right|^{2}$.
Now we can write, $ \mathbf{v}=\mathbf{u}+\rho\mathbf{a}_{t}$, and $
\mathbf{w}_{o,c}=\frac{\mathbf{v}}{\left\Vert \mathbf{v}\right\Vert }$. Thus, $
\mathcal{E}\left[\left|\mathbf{a}_{t}^{H}\left(\theta_{m}\right)\mathbf{w}_{o,c}\right|^{2}\right]=\mathcal{E}\left[\left|\frac{\mathbf{a}_{t}^{H}\mathbf{v}}{\left\Vert \mathbf{v}\right\Vert }\right|^{2}\right]$. The ratio of means approximation is $\mathcal{E}\left\{ \left|\mathbf{a}_{t}^{H}\left(\theta_{m}\right)\mathbf{w}_{o,c}\right|^{2}\right\} =\frac{\mathcal{E}\left|\mathbf{a}_{t}^{H}\mathbf{v}\right|^{2}}{\mathcal{E}\left\Vert \mathbf{v}\right\Vert ^{2}}
$ where
$
\mathcal{E}\left|\mathbf{a}_{t}^{H}\left(\theta_{m}\right)\mathbf{v}\right|^{2}=A_{\alpha}\left\Vert \mathbf{a}_{t}\left(\theta_{m}\right)\right\Vert ^{2}+\rho^{2}\left\Vert \mathbf{a}_{t}\left(\theta_{m}\right)\right\Vert ^{4},$ and $\mathcal{E}\left\Vert \mathbf{v}\right\Vert ^{2}=A_{\alpha}N_{s}+\rho^{2}\left\Vert \mathbf{a}_{t}\left(\theta_{m}\right)\right\Vert ^{2}$, and $\left\Vert \mathbf{a}_{t}\left(\theta_{m}\right)\right\Vert ^{2}=N_{s}$. Thus, $\mathcal{E}\left[\left|\mathbf{a}_{t}^{H}\left(\theta_{m}\right)\mathbf{w}_{o,c}\right|^{2}\right]\approx\frac{A_{\alpha}+\rho^{2}N_{s}}{A_{\alpha}+\rho^{2}}$, where $A_{\alpha}=K\left|\alpha\right|^{2}=K(1-\rho^{2})$. Thus
$
\mathcal{E}\left[A_{s}\right]=\left|\mathbf{w}_{r}^{H}\mathbf{a}_{r}\left(\theta_{m}\right)\right|^{2}\frac{A_{\alpha}+\rho^{2}N_{s}}{A_{\alpha}+\rho^{2}}$. Also,  we found, $ \mathcal{E}\left[Y_{s}\left|d_{t}=r\right.\right]=\mathcal{E}\left[\sum_{l\in\Phi_{b}\setminus\{o\}}
		d_{l,o}^{-\alpha_s}
		{Z_l}\right]
$ where ${Z_l}=\left|
		\mathbf{w}_{r}^{H}
		\mathbf{H}_{l,o}
		\mathbf{x}_{l}
		\right|^{2}$. 
Assuming identical interfering power ${P_l}$, independence between $Z_l$ and the BS locations, and an exclusion region $d_l \geq r$, Campbell's theorem gives     
        $ \mathcal{E}\left[Y_{s}\left|d_{t}=r\right.\right]=\frac{2\pi\lambda_{b} P_l K}{\alpha_{s}-2}r^{2-\alpha_{s}}
$. Thus,
$
\mathcal{E}[\gamma_s]
=\frac{(\alpha_s-2)\Xi_s}
{2\pi\lambda_b P_l K}\times \mathcal{E}_{d_{t}}\left[d_{t}^{-(2+\alpha_{s})}\right],$
where $\quad \Xi_s
=
\zeta_{m,o}P_{o,c}
\left|
\mathbf w_r^H\mathbf a_r(\theta_m)
\right|^2
\frac{A_\alpha+\rho^2N_s}
{A_\alpha+\rho^2}.$ For nearest target distance with altitude $h_{t}$ the PDF is $
f_{d_{t}}\left(r\right)=2r\pi\lambda_{r}e^{-\pi\lambda_{r}\left(r^{2}-h_{t}^{2}\right)}$. Thus, $
\mathcal{E}_{d_{t}}\left[d_{t}^{-2-\alpha_{s}}\right]=\pi\lambda_{r}(\pi\lambda_{r})^{\alpha_{s}/2}e^{\pi\lambda_{r}h_{t}^{2}} \Gamma\left(-\alpha_{s}/2,\pi\lambda_{r}h_{t}^{2}\right)
$. Finally,   the average SINR can be expressed as in Theorem 5.

\bibliographystyle{IEEEtran}
\bibliography{ref}

\end{document}